\PassOptionsToPackage{numbers, compress}{natbib}
\documentclass{article}
\newcommand{\treportflag}{1} 

\ifnum\treportflag=1
  \usepackage[preprint]{neurips_2026}
\else
  \usepackage{neurips_2026}
\fi

\usepackage[T1]{fontenc}    
\usepackage{hyperref}       
\usepackage{url}            
\usepackage{booktabs}       
\usepackage{amsfonts}       
\usepackage{nicefrac}       
\usepackage{xcolor}         
\usepackage{colortbl}
\usepackage{drago}
\usepackage{setspace}
\usepackage[normalem]{ulem}
\usepackage[on]{editing}
\usepackage{tcolorbox}
\usepackage{cleveref}

\newcommand{\M}{\mathcal{M}}
\newcommand{\Mmarg}{\M^{\mathrm{marg}}}
\newcommand{\Mzmatch}{\M^{Z\text{-match}}}
\newcommand{\Mpimatch}{\M^{\pi\text{-match}}}
\newcommand{\Mzdom}{\M^{Z\text{-dom}}}
\newcommand{\Mpidom}{\M^{\pi\text{-dom}}}
\newcommand{\eps}{\varepsilon}

\newif\ifexamples
\examplestrue  

\newcommand{\ttl}{Confounder Detection via Treatment Intent:\\ A New Observational Study Design}

\title{\ttl{}}

\author{%
  Drago Ple{\v c}ko \\
  Department of Statistics \& Data Science \\
  UCLA
  \And
  Patrik Okanovi{\' c}\\
  Department of Computer Science \\
  ETH Zurich
  \And
  Torsten Hoefler \\
  Department of Computer Science \\
  ETH Zurich
  \And
  Elias Bareinboim\\
  Causal AI Lab \\
  Columbia University
}

\usepackage{selectp}

\begin{document}

\maketitle

\begin{abstract}
  Understanding the effects of interventions is central to scientific progress, with randomized controlled trials (RCTs) regarded as the gold standard for causal inference in many applied fields. However, RCTs are costly, time-consuming, and often constrained by ethical or practical limitations, motivating the need for causal methods able to draw conclusions from observational data. While such data is collected at ever larger scale, making its use for causal inference is often hindered by the fact that not all variables affecting treatment allocation and the outcome are observed -- an issue known as unobserved confounding. In this paper, we introduce a new study design called \emph{confounder detection via treatment intent}. The idea is to query a human expert who makes treatment decisions, and ask them to compare pairs of units proposed by a principled matching strategy, with the goal of eliciting unobserved variables that explain why treatment decisions differ.
We provide a theoretical basis for such a procedure, ascertaining conditions under which such a study design may elicit unobserved confounders. 
Building on this newly established foundations, we study treatment effects of interventions in the intensive care unit (ICU). First, we show empirical evidence strongly indicating that electronic health records (EHRs) collected in ICUs are subject to unobserved confounding. By using clinical text notes as a proxy for physicians’ knowledge and leveraging natural language processing, we provide a proof of concept for our methodology in a semi-synthetic environment with a known ground truth.
\end{abstract}

\section{Introduction} \label{sec:intro}
Observational data is collected at ever larger scale across empirical
sciences, offering a valuable and inexpensive resource for answering
scientific questions \citep{hersh2013caveats, casey2013use}. This abundance of data also creates an opportunity to draw causal conclusions without resorting to costly experimentation. 
Such inference, however, does not come for free: moving across the layers of Pearl's Causal Hierarchy (PCH) \citep{bareinboim2020on} requires appropriate causal assumptions.  
We focus on the classical setting of treatment effect estimation, where the goal is to infer an interventional (Layer~2) quantity from observational data (Layer~1). The treatment is denoted by
$X$, outcome by $Y$, and a vector of confounders by $Z$, where $Z$ causally precedes $(X, Y)$. 
One of the most common causal assumptions used for identifying treatment effects in such a setting is the back-door criterion \citep{pearl:2k}, which permits adjustment-based identification of interventional quantities from observational data \citep{pearl:2k, bareinboim2025causalai}.

The assumptions required for back-door admissibility, however, need not always hold true. For instance, when some of the common causes of $X$ and $Y$ remain
unobserved -- a situation referred to as unobserved confounding -- back-door admissibility does not hold. 
Such confounding is arguably the most common roadblock faced by causal analysts working with observational data, appearing across empirical sciences, ranging from medicine and epidemiology to economics and the social sciences. 
Importantly, unobserved confounding can rarely be ruled out from the data alone. 
In this paper, we use an applied example to illustrate our methodology, specifically in the context of treatment effect estimation in the intensive care unit (ICU), where interventions are allocated by physicians attending to patients.
\begin{wrapfigure}{r}{0.42\textwidth}
    \centering
    \includegraphics[width=0.4\textwidth]{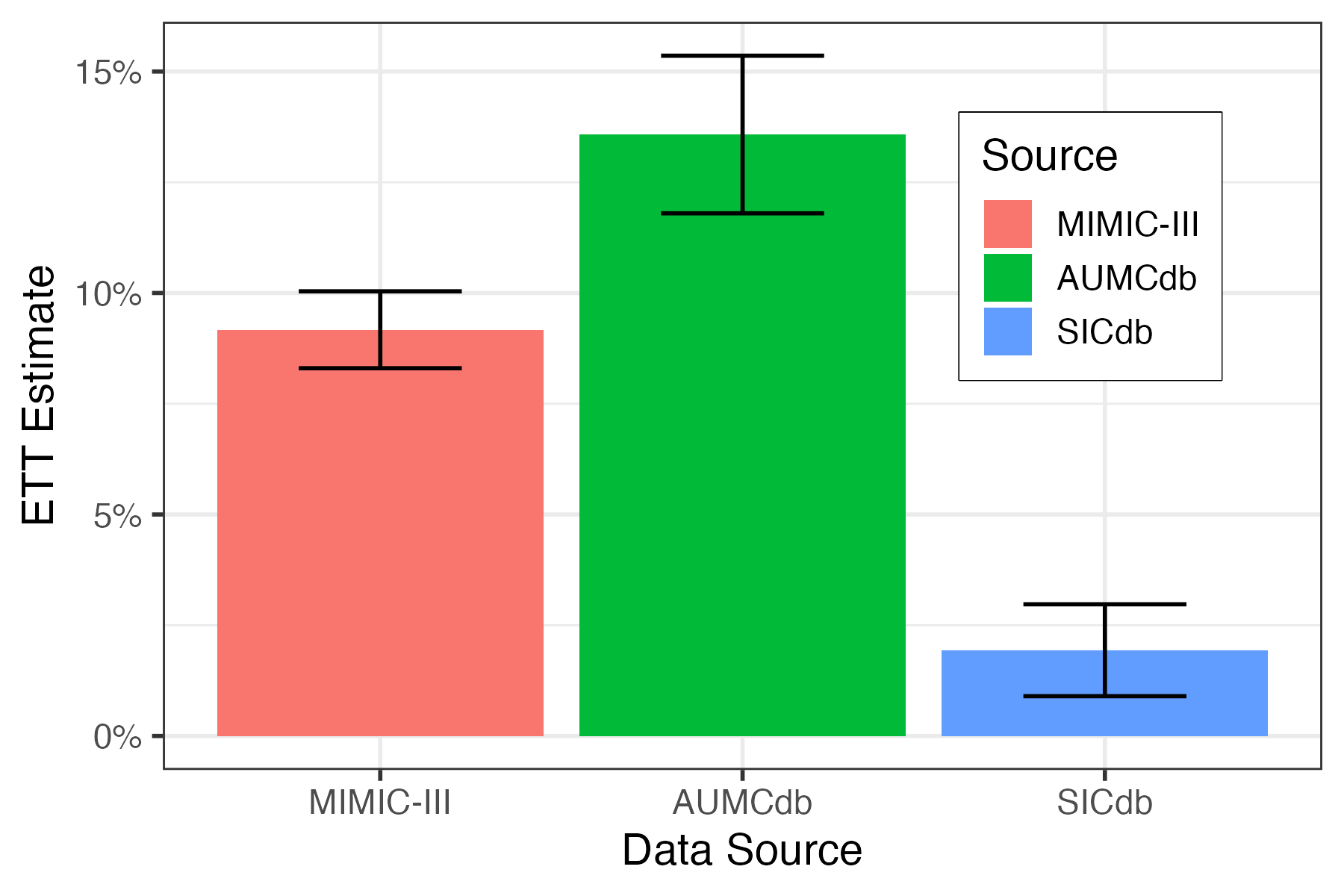}
    \caption{Estimated ETT of mechanical ventilation on in-hospital
    mortality across three ICU databases.}
    \label{fig:ett}
    \vspace{-0.1in}
\end{wrapfigure}
\begin{example}[Mechanical Ventilation from EHR data]\label{ex:mv-ett}
Mechanical ventilation $(X)$ is one of the key treatments for ICU
patients, used to support or improve oxygenation, while in-hospital mortality $(Y)$ is a typical outcome of interest. Adjusting for a set of observed covariates $Z$ (including age, sex, SOFA score, and physiological signals; see~\cref{sec:methods}), we estimate the effect of the treatment on
the treated,
\begin{align} \label{eq:ett-intro}
    \mathrm{ETT} \coloneqq \ex[Y_{x_1} - Y_{x_0} \mid X = x_1], 
\end{align}
across three large ICU databases (MIMIC-III \citep{johnson2016mimic}, AUMCdb \citep{thoral2021sharing}, SICdb \citep{rodemund2023salzburg}). \Cref{fig:ett} shows the estimated $\widehat{\mathrm{ETT}}^{bd}$ values, which are strongly above zero in every dataset. Under the assumption of no UCs, one would conclude that mechanical ventilation
increases mortality among the patients who actually received it.
While it is possible that some patients in the broader population may be harmed by this treatment, domain experts would strongly reject such a finding of harm in the treated population, who are often ventilated for a reason. The estimate is therefore a near-certain signature of UCs.
\end{example}

How does one proceed in the presence of such unobserved confounding?
The first broad alternative is to collect experimental data (for instance, for the ETT in \cref{eq:ett-intro} one could apply \emph{counterfactual randomization} \citep{bareinboim2015bandits}), which would control for UCs by design. 
However, this alternative does not come without its own difficulties: carrying out trials requires
significant resources in terms of cost, study design, preparation, patient
recruitment, and conduct, and ethical or logistical constraints may
further limit their feasibility -- or rule them out entirely
\citep{harhay2019randomized, schulz2018randomised}. 
The second alternative is to leverage domain knowledge to identify additional covariates that should be measured, and to collect data on them in a future iteration of the study. Eliciting such domain knowledge directly, however, may be non-trivial: experts are rarely trained to reason in such terms, and asking an ICU physician to enumerate the
unmeasured causes of the mechanical-ventilation-to-mortality
relationship is a daunting exercise.
Interestingly, systematic ways for eliciting unmeasured confounders have received little investigation.

In this paper, we propose an observational study design that attempts to systematically elicit candidate unobserved confounders from a decision-maker (DM) who assigns treatment. The key observation is that any $U$ that confounds the $X \to Y$ relationship must be available to the DM at the time of the decision -- otherwise it could not affect $X$. Suppose then that (i) we may interact with the DM, and (ii) the DM has the cognitive
ability to compare different sample pairs. 
Under these two assumptions, rather than asking them to list unmeasured confounders in the abstract, we may ask them to compare \emph{pairs} of units $(i,j)$ with $X_i = 1$ (treated) and $X_j = 0$ (untreated), with the intention of eliciting why the treatment decision differed.

\addtocounter{example}{-1}
\begin{example}[Continued -- Comparing Ventilated Pairs]\label{ex:pairs}
Let $Z^{(1)}$ denote the SOFA score, a standard ICU severity score
ranging from 0 to 24, with higher values corresponding to greater illness severity. Let $U^{(1)}$ be the patient's difficulty breathing perceived by the physician, which is not recorded in the EHR. We consider
three candidate pair proposals, where in each a treated patient $i$
($X_i = 1$) and an untreated patient $j$ ($X_j = 0$) are compared:
\begin{enumerate}[label=(\arabic*), leftmargin=1.75em, nosep]
    \item Patients $i$ and $j$ have the same SOFA score,
    $Z_i^{(1)} = Z_j^{(1)} = 5$, meaning neither is sicker along the observed $Z^{(1)}$. If asked for a reason for different treatment between $i$ and $j$, a physician who made the decision with access to $(Z^{(1)}, U^{(1)})$ would most likely not cite $Z^{(1)}$ as the explanation of their decision. If $U^{(1)}_i = 0$, while $U^{(1)}_j = 1$, variable $U^{(1)}$ would be a more likely explanation.
    \item The treated patient $i$ has $Z_i^{(1)} = 5$ while the untreated patient $j$ has $Z_j^{(1)} = 10$, meaning patient $j$ is sicker along $Z^{(1)}$. In such a setting, the DM would again likely not cite $Z^{(1)}$ as the explanation for their treatment decision.
    \item Finally, suppose that $Z^{(2)}$ (sex) is available, and we have
    \begin{align}
        Z^{(1)}_i &= 5,\; Z^{(2)}_i = 1,\; U^{(1)} _i = 0, \\
        Z^{(1)}_j &= 5,\; Z^{(2)}_j = 0,\; U^{(1)} _j = 1.
    \end{align}
    In such a setting, determining $U^{(1)}$ as the explanation of differing treatment would be made more difficult compared to Case~1, due to the additional existence of variable $Z^{(2)}$ along which the patients differ, which may deter the DM.
\end{enumerate}
\vspace{-0.23in}
\end{example}
\begin{wrapfigure}{r}{0.26\textwidth}
    \centering
    \vspace{-0.2in}
    \begin{tikzpicture}
	 [>=stealth, rv/.style={thick}, rvc/.style={triangle, draw, thick, minimum size=7mm}, node distance=18mm]
	 \pgfsetarrows{latex-latex};
	 \begin{scope}
		\node[rv] (z) at (-0.65,1) {$Z$};
        \node[rv,draw=gray] (u) at (0.65,1) {$U$};
	 	\node[rv] (x) at (-1.25,0) {$X$};
	 	\node[rv] (y) at (1.5,0) {$Y$};
	 	\draw[->] (x) -- (y);
        \draw[->] (z) -- (y);
        \draw[->] (u) -- (y);
        \draw[->] (z) -- (x);
        \draw[->] (u) -- (x);
        \path[<->] (z) edge[bend left = 30, dashed] (u);
	 \end{scope}
	 \end{tikzpicture}
    \caption{Causal diagram.}
    \label{fig:dag}
    \vspace{-0.2in}
\end{wrapfigure}
\!\!The above example illustrates some of the key principles behind our framework. Case~(1) argues that if patients match on $Z$, the DM is forced to name a variable outside $Z$ to explain the treatment difference -- e.g., $U^{(1)}$. 
Case~(2) leverages a natural monotonicity consideration of this setting: if the decisions are monotone with respect to $Z^{(1)}$ (which is the case with the SOFA score in ICU data), then seeing the sicker patient untreated would render $Z^{(1)}$ an unlikely explanation -- again pointing to $U^{(1)}$.
Case~(3) shows that comparisons become harder whenever patients differ
along observed covariates that are not clearly monotonically related to the outcome (or when monotonic covariates point to different illness levels), since such differences invite explanations related to $Z$, and do not reveal UCs. 
These observations raise the central question of this paper: are there systematic, principled ways to propose patient pairs to a domain expert
so that the elicited explanations are likely to correspond to genuine
unobserved confounders? Specifically, our contributions in the paper are as follows:
\begin{enumerate}[label=(\roman*), leftmargin=*]
    \item We formalize a new observational study design, \emph{confounder
    detection via treatment intent} (CDTI). We further develop a framework
    for analyzing such a study design, comprising a matching strategy $\M$ (how sample pairs are proposed to a human annotator) and an extraction strategy $\eps$ (how the decision-maker reasons over a pair).
    \item We introduce specific matching strategies -- $Z$-matching, $\pi$-matching, and
    $Z$-dominance -- 
    and provide theoretical support for them by establishing
    stochastic-dominance results that characterize when each strategy
    yields pair distributions informative about $U$ (\cref{thm:stoch-dom}). We then establish further results on dominance between strategies (\cref{thm:zdom-beats-zmatch,thm:pi-beats-rand}).
    \item We illustrate our study design based on ICU data. First, we
    provide evidence that EHR-based estimates of the effect of common
    ICU interventions are strongly confounded (\Cref{ex:mv-ett}). 
    Then, we further validate our study design based on
    semi-synthetic MIMIC-III data.
\end{enumerate}

\paragraph{Related work.}
The ideas in this paper are related to counterfactual randomization \citep{bareinboim2015bandits}, which acknowledges that the DM has access to information about $U$ that is not captured in the recorded covariates. Our work, however, uses this observation with a different purpose, namely for eliciting UCs through structured queries. 
Further, our work is also related to research on combining observational and experimental data to sharpen causal conclusions \citep{athey2020combining, rosenman2020combining,lee2020general}; in contrast, however, our study design remains purely observational and does not modify treatment decisions; in other words, it does not require experimentation on the treatment variable $X$.
\section{Confounder Detection via Treatment Intent}\label{sec:methods}
We now describe our theoretical setting. We consider a treatment variable $X$ (e.g., mechanical ventilation in ICU), outcome variable $Y$ (in-hospital mortality), a set of observed confounders $Z$ (SOFA score, age, sex), and a set of unobserved confounders $U$ (e.g., presence of hemothorax or other lung complications) as in \cref{fig:dag}. We have data on $(Z, X, Y)$. Our goal is to recover some of the unobserved confounders $U$, where we assume access to some kind of proxy information for $U$. 
In practice, the idea is that we may be able to query the DM controlling $X$, and ask them to compare two units: unit $i$ who was treated $(X_i = 1)$ and unit $j$ who was not ($X_j = 0$). If it happens that unit $i$ is smaller than unit $j$ on each observed confounder, $Z_i^{(k)} < Z_j^{(k)}$ for all $k$, then the DM may be able to identify some of the $U$ variables: e.g., it may happen that $U^{(l)}_i > U^{(l)}_j$, so that for the unobserved confounder $U^{(l)}$ unit $i$ is larger, which actually explains the decision.
%
Theoretically, the first important part of our framework is a \emph{matching strategy} $\M$. $\M$ serves for generating \emph{proposals} that the DM can then compare; e.g., $\M$ may generate pairs $(i, j)$ that are sent to the human annotator. We consider several matching strategies: 
\begin{enumerate}[label=(\roman*), leftmargin=2em]
    \item $\Mzmatch$ picks pairs with $X_i = 1, X_j=0$ and $Z_i = Z_j$, 
    \item $\Mpimatch$ picks pairs with $X_i = 1, X_j=0$ and $\pi(Z_i) = \pi(Z_j)$ where $\pi(Z)=P(X =1 \mid Z)$,
    \item $\Mzdom$ picks pairs with $X_i = 1, X_j=0$ and $Z_i \leq Z_j$ coordinatewise, 
    \item $\Mmarg$ picks pairs $(i, j)$ with just $X_i = 1, X_j = 0$; this strategy serves as a baseline.
\end{enumerate}
We begin by immediately stating one of our main theoretical results, which anchors our framework and formalizes the intuition from \cref{sec:intro} (all proofs are provided in \cref{app:proofs}): 
\begin{theorem}[Stochastic dominance of matching strategies] 
\label{thm:stoch-dom}
Under appropriate assumptions $\mathcal{A}^{(1)}, \mathcal{A}^{(2)}, \mathcal{A}^{(3)}$, respectively, 
for each strategy $\M \in \{\Mzmatch, \Mpimatch, \Mzdom\}$, we have:
\begin{align}
    P(U \mid Z = z, X = 1) &\;\succeq_{st}\; P(U \mid Z = z, X = 0) \label{eq:zmatch-stoch} \\
    P(U \mid \pi = p, X = 1) &\;\succeq_{st}\; P(U \mid \pi = p, X = 0) \label{eq:pimatch-stoch}\\
    P(U \mid Z = z', X = 1) &\;\succeq_{st}\; P(U \mid Z = z, X = 0) \text{ if }  z' < z, \label{eq:zdom-stoch}
\end{align}
where the $\succeq_{st}$ denotes the multivariate stochastic order, $A \succeq_{st} B \implies \ex[\phi(A)] \geq \ex[\phi(B)]$ for every coordinatewise non-decreasing $\phi: \R^d \to \R$. 
\end{theorem}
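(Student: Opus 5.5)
Since the assumptions $\mathcal{A}^{(1)},\mathcal{A}^{(2)},\mathcal{A}^{(3)}$ are not spelled out before the statement, I will fix the natural choices and prove each part by a likelihood-ratio argument.

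\textbf{Assumptions.} For \cref{eq:zmatch-stoch}, $\mathcal{A}^{(1)}$ will be the following pair of conditions.
\begin{enumerate}
\item[(a)] The treatment propensity $e(z,u) \coloneqq P(X=1\mid Z=z,U=u)$ is coordinatewise non-decreasing in $u$.
\item[(b)] For each fixed $z$, the conditional law $P(U\mid Z=z)$ is associated, in the sense of being $\mathrm{MTP}_2$ or satisfying the FKG lattice condition.
\end{enumerate}
The multivariate stochastic order is needed because $U$ is vector-valued. A scalar ratio argument alone does not suffice, so association is the extra ingredient.

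\textbf{Part (i).} By Bayes,
\begin{align}
  \frac{dP(U\mid Z=z,X=1)}{dP(U\mid Z=z)}(u) = \frac{e(z,u)}{\pi(z)}, \qquad \frac{dP(U\mid Z=z,X=0)}{dP(U\mid Z=z)}(u) = \frac{1-e(z,u)}{1-\pi(z)}.
\end{align}
Take a coordinatewise non-decreasing $\phi$ and write $\mu_z = P(U\mid Z=z)$. Then
\begin{align}
  \ex[\phi(U)\mid z,x_1]-\ex[\phi(U)\mid z,x_0] = \frac{\mathrm{Cov}_{\mu_z}(\phi(U),e(z,U))}{\pi(z)(1-\pi(z))}.
\end{align}
Both $\phi$ and $e(z,\cdot)$ are non-decreasing, so association of $\mu_z$ (Harris--FKG) makes the covariance non-negative.

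I will state the stronger fact as well: the density ratio between the two conditionals is $\propto e/(1-e)$, which is monotone. Under $\mathrm{MTP}_2$ of $\mu_z$, a monotone likelihood ratio gives the multivariate likelihood-ratio order. The covariance route, however, already suffices for $\succeq_{st}$ as the theorem defines it.

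\textbf{Part (ii).} Here I condition on the event $\{\pi(Z)=p\}$ instead of $\{Z=z\}$. $\mathcal{A}^{(2)}$ will require (a) to hold uniformly in $z$ and the mixture $P(U\mid \pi(Z)=p)$ to be associated. The same covariance identity then holds with $\mu_z$ replaced by $P(U\mid\pi=p)$ and $e(z,u)$ replaced by $\tilde e(p,u)=P(X=1\mid \pi(Z)=p,U=u)$.

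The subtle step is showing that $\tilde e(p,\cdot)$ is non-decreasing. This holds, for example, if $Z\perp U$ given $\pi(Z)$ or, more generally, if $P(Z\mid U=u,\pi=p)$ does not shift toward low-$e$ regions as $u$ grows. I expect this to be the main obstacle, because mixing over $z$ can break monotonicity in the style of Simpson's paradox. I will therefore build it into $\mathcal{A}^{(2)}$ explicitly, for instance by requiring $e(z,u)=g(\pi(z),u)$ with $g$ non-decreasing in $u$.

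\textbf{Part (iii).} I will chain two orderings.
\begin{align}
  P(U\mid z',x_1)\succeq_{st}P(U\mid z',x_0)\quad\text{and}\quad P(U\mid z',x_0)\succeq_{st}P(U\mid z,x_0)\ \text{ for } z'<z.
\end{align}
The first comes from part (i). Instead I will go directly through $P(U\mid z',x_1)\succeq_{st} P(U\mid z,x_1)$, using transitivity of $\succeq_{st}$.

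$\mathcal{A}^{(3)}$ will add two conditions.
\begin{enumerate}
\item[(c)] $e(z,u)$ is non-decreasing in $z$. This is the DM monotonicity from \cref{ex:pairs}.
\item[(d)] $\mathcal{A}^{(1)}$ holds, together with the condition that $U$ and $Z$ are not positively dependent, e.g.\ $U\perp Z$, or $P(U\mid Z=z)$ is stochastically non-increasing in $z$.
\end{enumerate}

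With these, the untreated density is $\propto (1-e(z,u))\,\mu_z(u)$. Raising $z$ to a larger value does two things: it lowers $1-e$ pointwise, so the ratio $(1-e(z,u))/(1-e(z',u))$ is non-increasing in $u$ if $e$ has the needed supermodularity, and it does not raise $\mu_z$. That gives $P(U\mid z',x_0)\succeq_{st}P(U\mid z,x_0)$.

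To make this clean I will assume $U\perp Z$ and that $\log(1-e)$ is submodular in $(z,u)$, which yields a likelihood-ratio comparison. Composing with part (i) at $z'$ completes the proof.
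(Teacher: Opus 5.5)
\textbf{Parts (i) and (ii).} Your part (i) is correct and takes a different route from the paper. The covariance identity plus Harris--FKG needs only that $P(U\mid Z=z)$ be associated, which is weaker than the log-supermodularity assumed in \cref{prop:z-match-multi}. The paper's Karlin--Rinott argument needs MTP$_2$ but in return yields the stronger multivariate likelihood-ratio order.

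In part (ii) you miss the key idea, and the ``main obstacle'' you flag is an artifact of your route. By the balancing property, $P(z\mid\pi(Z)=p,X=x)=P(z\mid\pi(Z)=p)$ for both $x$. Hence $P(U\mid\pi=p,X=x)=\int P(U\mid Z=z,X=x)\,P(dz\mid\pi=p)$ mixes the $Z$-matched laws with a \emph{common} weight. Integrating the pointwise inequality of part (i) against it gives (ii) under $\mathcal{A}^{(1)}$ alone; this is the paper's proof. You need neither monotonicity of $\tilde e(p,\cdot)$ nor association of the mixture $P(U\mid\pi=p)$, and the latter does not follow from association of its components. In fact $\mathrm{Cov}(\phi(U),\tilde e(p,U)\mid\pi=p)\ge 0$ comes out as a consequence even when $\tilde e(p,\cdot)$ is non-monotone. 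Your (ii) is valid under the assumptions you add (e.g.\ $e(z,u)=g(\pi(z),u)$), but they are far stronger than necessary.

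\textbf{Part (iii) has a genuine gap.} Let $f_1=P(U\mid z',X=0)$, $f_0=P(U\mid z,X=0)$ and $r=f_1/f_0$. You infer $f_1\succeq_{st}f_0$ from $r$ being non-decreasing in $u$. For $d_U\ge 2$ this does not follow. We have $\mathbb{E}_{f_1}\phi-\mathbb{E}_{f_0}\phi=\mathrm{Cov}_{f_0}(\phi,r)$, and this is guaranteed non-negative for every non-decreasing $r$ only when $f_0$ is itself associated. But $f_0\propto(1-e(z,\cdot))P(\cdot\mid z)$ conditions on the collider $X=0$, which typically induces negative dependence among the coordinates of $U$ (explaining away). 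So association of $P(U\mid z)$ does not carry over.

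Here is a concrete counterexample. Take $Z\in\{0,1\}$ with $z'=0<z=1$, and let $U^{(1)},U^{(2)}$ be i.i.d.\ fair coins independent of $Z$. Set $1-e(1,u)=\tfrac25,\tfrac{3}{10},\tfrac{3}{10},\tfrac{1}{10}$ at $u=00,10,01,11$, and $1-e(0,u)=(1+u^{(1)}/3)(1-e(1,u))$. Every condition of your $\mathcal{A}^{(3)}$ holds: $e$ is non-decreasing in $z$ and in $u$, $P(U\mid Z)$ is a product measure independent of $Z$, and $\log(1-e)$ is submodular. Yet $P(U^{(2)}=1\mid Z=0,X=0)=13/37<4/11=P(U^{(2)}=1\mid Z=1,X=0)$.

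In this example the $z$--$u$ cross inequality of condition~\ref{cond:z-dom-cross} in \cref{prop:z-dom} holds at $x=0$. What fails is its within-$U$ inequality, since $\tfrac25\cdot\tfrac1{10}<(\tfrac3{10})^2$. This is why the paper states the condition on the tilted density $h^{(x)}(z,u)=P(X=x\mid z,u)P(u\mid z)$ and verifies the Karlin--Rinott four-point inequality jointly in $(z,u)$. That formulation also lets the collider term overturn a \emph{positive} $Z$--$U$ dependence (\cref{ex:z-dom-log-gauss}), a regime your $U\perp Z$ excludes.

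Under your own restriction ($U\perp Z$, or $P(U\mid Z=z)$ stochastically non-increasing in $z$) the target is in fact immediate from your covariance identity, with no cross condition on $e$. Chain through the marginals: $P(U\mid z',X=1)\succeq_{st}P(U\mid z')\succeq_{st}P(U\mid z)\succeq_{st}P(U\mid z,X=0)$. The link through $P(U\mid z',X=0)$ is where your chain breaks.
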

Intuitively, the theorem shows that conditioning on $X = 1$ vs. $X=0$ (for different additional contexts $C=c$) results in a \emph{probabilistically larger} $U$ for the treated unit, which justifies why a matching strategy may work.
With this target established, we now build towards \cref{thm:stoch-dom} sequentially, unpacking the necessary technical conditions and providing the intuition for each matching strategy.

\subsection{\texorpdfstring{$Z$}{Z}-Matching}
We start with the $Z$-matching strategy, and first consider the basic case of univariate $U$:
\begin{proposition}[$Z$-matching via MLR]\label{prop:z-match-mlr}
Assume $U \in \R$ and that $P(X = 1 \mid Z = z, U = u)$ is non-decreasing in $u$ for every $z$. Then
\begin{align} \label{eq:uni-z-ord}
    P(U \mid Z = z, X = 1) \;\succeq_{st}\; P(U \mid Z = z, X = 0).
\end{align}
\end{proposition}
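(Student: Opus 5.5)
Fix $z$ and write $\mu_z := P(U \in \cdot \mid Z = z)$ and $e(u) := P(X=1 \mid Z=z, U=u)$, which is non-decreasing in $u$ by assumption. Assume $0 < P(X=1\mid Z=z) < 1$, so both conditionals in \cref{eq:uni-z-ord} are well defined. By Bayes' rule, both conditional laws are reweightings of $\mu_z$:
\begin{align}
  dP(U \mid Z=z, X=1)(u) &= \frac{e(u)}{\ex[e(U)\mid Z=z]}\, d\mu_z(u), \\
  dP(U \mid Z=z, X=0)(u) &= \frac{1-e(u)}{\ex[1-e(U)\mid Z=z]}\, d\mu_z(u).
\end{align}
The likelihood ratio of the treated law to the untreated law is therefore proportional to $e(u)/(1-e(u))$. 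This ratio is non-decreasing in $u$, with the value $+\infty$ allowed where $e(u)=1$. This is a monotone likelihood ratio (MLR) ordering, and MLR implies first-order stochastic dominance. For univariate $U$, first-order dominance is exactly $\succeq_{st}$ as defined in \cref{thm:stoch-dom}.

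To avoid technical issues with densities and ratios on sets where $e \in \{0,1\}$, I would prove the expectation inequality directly. Take any non-decreasing $\phi$ for which the expectations exist, and let $p := \ex[e(U)\mid Z=z] \in (0,1)$. The claim $\ex[\phi(U)\mid Z=z, X=1] \ge \ex[\phi(U)\mid Z=z, X=0]$ reads
\begin{align}
  \frac{\ex[\phi(U) e(U)]}{p} \;\ge\; \frac{\ex[\phi(U)(1-e(U))]}{1-p},
\end{align}
with all expectations taken under $\mu_z$. Multiplying through by $p(1-p)$ and simplifying, this is equivalent to
\begin{align}
  \ex[\phi(U)e(U)] \;\ge\; \ex[\phi(U)]\,\ex[e(U)],
\end{align}
that is, $\mathrm{Cov}_{\mu_z}(\phi(U), e(U)) \ge 0$.

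This is Chebyshev's association inequality for two non-decreasing functions of a single real random variable. Its proof uses an independent copy $U'$ of $U$:
\begin{align}
  2\,\mathrm{Cov}(\phi(U), e(U)) = \ex\big[(\phi(U)-\phi(U'))(e(U)-e(U'))\big] \ge 0,
\end{align}
where the integrand is pointwise non-negative because $\phi$ and $e$ are monotone in the same direction.

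The main obstacle is only integrability. If $\phi$ is unbounded, I would first prove the inequality for bounded non-decreasing $\phi$, for example indicators $\phi = \mathbb{1}\{u > t\}$. These already give the CDF ordering $P(U>t\mid Z=z, X=1) \ge P(U>t\mid Z=z, X=0)$ for every $t$, which is first-order dominance. General $\phi$ then follows by the standard quantile-coupling or monotone-convergence argument. No measurability issue arises beyond assuming $e(\cdot)$ is a measurable version of the conditional probability.
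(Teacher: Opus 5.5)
Your proof is correct. Its first paragraph is exactly the paper's argument: Bayes reweighting of $P(U\mid Z=z)$, a likelihood ratio proportional to $\pi/(1-\pi)$, and MLR $\Rightarrow$ first-order dominance.

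The step you actually commit to is different. Instead of citing MLR $\Rightarrow \succeq_{st}$, you use the identity
\[
\ex[\phi(U)\mid Z=z,X=1]-\ex[\phi(U)\mid Z=z,X=0]=\frac{\mathrm{Cov}_{\mu_z}\big(\phi(U),e(U)\big)}{p(1-p)}
\]
and close with Chebyshev's covariance inequality. This has two payoffs.
\begin{itemize}
\item It is rigorous at the edges the paper glosses over. You need no densities and never divide by $1-\pi(z,u)$, so atoms in $P(U\mid Z=z)$ and propensities equal to $0$ or $1$ are handled.
\item The identity is dimension-free. For $U\in\R^d$ it shows that the conclusion of \cref{prop:z-match-multi} holds whenever $P(U\mid Z=z)$ is positively associated. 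Log-supermodularity implies association by the FKG inequality, but association is strictly weaker: for Gaussians it needs only nonnegative correlations, rather than nonpositive off-diagonal precision entries.
\end{itemize}

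The paper's route has its own advantages. It yields the stronger likelihood-ratio order. Its Karlin--Rinott form is also what extends to the $Z$-shift in \cref{prop:z-dom}. There the two laws being compared are not reweightings of a common base measure by $e$ and $1-e$, so your covariance identity is not available.

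One small simplification: since $0\le e\le 1$, your covariance argument already applies to every $\phi$ with $\phi(U)$ integrable under $\mu_z$. That integrability follows from integrability under both conditional laws. So the detour through indicators and a limiting argument is unnecessary.
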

Our first result illustrates that if $U$ has a monotonic effect on the treatment $X$ for each $Z=z$, then conditioning on $X = 1$ vs. $X=0$ (for a fixed $Z=z$) results in a probabilistically larger $U$. In the context of our running example (see \cref{ex:pairs}), this means that for two patients with the same SOFA score $Z^{(1)}=5$, the one treated ($X=1$) has a probabilistically larger difficulty breathing $U^{(1)}$ compared to the untreated one ($X=0$). 
We note that the monotonic effect assumption is plausible in many health contexts, and many of the covariates used in ICU to assess patient state satisfy this property.
Thus, the above result provides the basis for why it makes sense to send $Z$-matched treated-untreated unit pairs for comparison to a human annotator.
We next move onto the case of multivariate $U$, and prove a key stochastic dominance result for it:
\begin{proposition}[$Z$-matching, multivariate $U$]\label{prop:z-match-multi}
Assume $U \in \R^d$ and:
\begin{enumerate}[label=(\roman*), leftmargin=1.75em, nosep]
    \item $P(X = 1 \mid Z = z, U = u)$ is non-decreasing in each coordinate of $u$, for every $z$; \label{cond:pi-mono}
    \item \label{cond:puz-super} $P(U \mid Z = z)$ is log-supermodular, i.e., $P(u \mid z)\, P(u' \mid z) \le P(u \wedge u' \mid z)\, P(u \vee u' \mid z)$ for all $u, u'$, where $\wedge, \vee$ denote coordinatewise minima and maxima. For twice-differentiable densities, this is equivalent to $\frac{\partial^2 \log P(u\mid z)}{\partial u^{(i)}\partial u^{(j)}} \geq 0$; Then 
\end{enumerate}
\begin{align}
    P(U \mid Z = z, X = 1) \;\succeq_{st}\; P(U \mid Z = z, X = 0).
\end{align}
\end{proposition}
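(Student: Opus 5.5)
The plan is to fix $z$ throughout and reduce the claim to the FKG (Harris) inequality for the log-supermodular measure $\mu(du) := P(du \mid Z = z)$. Write $g(u) := P(X = 1 \mid Z = z, U = u)$, which is coordinatewise non-decreasing by condition~\ref{cond:pi-mono}. By Bayes' rule, the two conditional laws are tilts of $\mu$:
\begin{align}
    P(du \mid Z = z, X = 1) = \frac{g(u)\,\mu(du)}{\ex_\mu[g]}, \qquad
    P(du \mid Z = z, X = 0) = \frac{(1-g(u))\,\mu(du)}{\ex_\mu[1-g]}.
\end{align}
These are well defined whenever $0 < \ex_\mu[g] < 1$. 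In the degenerate cases one of the conditionals is undefined and there is nothing to prove. For a coordinatewise non-decreasing $\phi$ (bounded, or integrable, and one can truncate and pass to the limit), we must show $\ex_\mu[\phi g]/\ex_\mu[g] \ge \ex_\mu[\phi(1-g)]/\ex_\mu[1-g]$.

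The key step is to clear denominators. With $a = \ex_\mu[g]$, the inequality becomes $(1-a)\ex_\mu[\phi g] \ge a\,(\ex_\mu[\phi] - \ex_\mu[\phi g])$. This simplifies to
\begin{align}
    \ex_\mu[\phi g] \;\ge\; \ex_\mu[\phi]\,\ex_\mu[g],
\end{align}
which says exactly that $\phi$ and $g$ are positively correlated under $\mu$. Both functions are coordinatewise non-decreasing, and $\mu$ is log-supermodular by condition~\ref{cond:puz-super}. So the FKG inequality on $\R^d$ (equivalently the Holley/Preston inequality for product-dominated lattices) gives precisely this covariance bound. As a sanity check, for $d=1$ every density is trivially log-supermodular, and the argument recovers \cref{prop:z-match-mlr} via Chebyshev's association inequality.

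The main obstacle is justifying FKG in the generality stated, namely a possibly continuous density on $\R^d$ rather than a finite distributive lattice. My first attempt would be to cite the continuous FKG theorem (e.g.\ Preston, or Karlin--Rinott's MTP$_2$ results: MTP$_2$ densities are associated). If a self-contained route is preferred, I would instead discretize $u$ onto finite grids, which are distributive lattices. Log-supermodularity of the restricted mass function is inherited pointwise from the density, the lattice FKG inequality then applies, and monotone and dominated convergence pass to the limit. A second, minor point is that the stated definition of $\succeq_{st}$ quantifies over all non-decreasing $\phi$. I would restrict to bounded $\phi$, or to indicators of upper sets, which characterize the multivariate order, and extend to integrable $\phi$ by truncation.
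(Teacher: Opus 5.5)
Your proof is correct, but it rests on a different lattice inequality than the paper's.

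\textbf{The paper's route.} The paper applies the two-density Holley/Karlin--Rinott condition of \cref{lem:kr} directly to $f_1 \propto \pi g$ and $f_0 \propto (1-\pi) g$. Monotonicity gives $(1-\pi(u))\,\pi(u') \le (1-\pi(u\wedge u'))\,\pi(u\vee u')$ pointwise, log-supermodularity of $g$ supplies the other factor, and the lemma finishes the argument with no integration. What it certifies is the pointwise four-point (multivariate likelihood-ratio) ordering of the two conditionals, which is stronger than $\succeq_{st}$. The same template is then reused for the $Z$-shift link in \cref{prop:z-dom}. Your tilt argument would adapt there too: the cross-partial condition makes $h^{(x)}(z',\cdot)/h^{(x)}(z,\cdot)$ non-decreasing in $u$, and the within-$U$ condition makes $h^{(x)}(z,\cdot)$ log-supermodular.

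\textbf{Your route.} You clear denominators to reduce the claim to $\ex_\mu[\phi g] \ge \ex_\mu[\phi]\,\ex_\mu[g]$ and then invoke the one-measure FKG inequality. This makes the real requirement visible. You only use that $P(U \mid Z = z)$ is \emph{associated}, which log-supermodularity implies but which is strictly weaker. So your argument proves the proposition under a relaxed version of (ii), which is the kind of relaxation the limitations paragraph asks for. Applying FKG separately to $g$ and to the non-increasing $1-g$ also gives the sandwich $P(U\mid z, X{=}0) \preceq_{st} P(U \mid z) \preceq_{st} P(U\mid z, X{=}1)$.

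\textbf{One caution.} If you take the self-contained discretization route, evaluating a density at grid points needs some regularity, because a density is only defined almost everywhere. Citing the continuous Karlin--Rinott result that MTP$_2$ densities are associated is cleaner. It is also the same level of citation the paper uses for \cref{lem:kr}.
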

The above result shows that stochastic dominance of $Z$-matching for a univariate $U$ can be extended to a multivariate $U$, provided that the conditional $P(U \mid Z = z)$ is \emph{log-supermodular}. For twice-differentiable functions, log-supermodularity is equivalent to saying that cross-component second order derivatives $\frac{\partial}{\partial u^{(i)}\partial u^{(j)}} \log P(U \mid Z = z)$ are greater than $0$. 
For instance, in the Gaussian setting, log-supermodularity requires the off-diagonal precision matrix entries to be non-positive, $(\Sigma^{-1})_{kl} \;\leq\; 0$ (see \cref{ex:gauss-logsupermod} of \cref{appendix:examples}).
In the context of our ICU example, this means that among patients with identical observed covariates $Z=z$, the unobserved severity markers $U^{(1)}, U^{(2)}, \dots$ (e.g., perceived difficulty breathing, hemodynamic instability not captured in $Z$) are \emph{positively dependent}: a patient unusually severe along one unobserved axis is, on average, also more severe along the others. This rules out scenarios in which the unobserved factors \emph{trade off} against one another at fixed $Z=z$, but accommodates the more common clinical reality that severity tends to cluster across organ systems. \cref{ex:necessity-supermod} of \cref{appendix:examples} shows why log-supermodularity in \cref{prop:z-match-multi} is necessary.

\subsection{\texorpdfstring{$\pi$}{Pi}-Matching}
We next move onto the $\pi$-matching strategy, showing that it inherits the result from the $Z$-matching case, just by invoking the propensity's covariate balancing property $X \ci Z \mid \pi(Z)$.
Specifically, we have the following key result:
\begin{proposition}[$\pi$-matching, multivariate $U$]\label{prop:pi-match-multi}
Under the monotonicity and log-supermodularity assumptions of \cref{prop:z-match-multi}, for every $p \in (0, 1)$,
\begin{align} \label{eq:pi-match-dom}
    P(U \mid \pi(Z) = p, X = 1) \;\succeq_{st}\; P(U \mid \pi(Z) = p, X = 0).
\end{align}
\end{proposition}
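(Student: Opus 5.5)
The plan is to reduce \cref{prop:pi-match-multi} to \cref{prop:z-match-multi} pointwise in $z$. The tool is the covariate balancing property of the propensity score $\pi(Z) = P(X=1 \mid Z)$, which is a function of $Z$ alone. Fix $p \in (0,1)$ and a coordinatewise non-decreasing $\phi:\R^d\to\R$ (bounded, or integrable under both conditionals). The goal is to show $\ex[\phi(U) \mid \pi(Z)=p, X=1] \ge \ex[\phi(U) \mid \pi(Z)=p, X=0]$.

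\textbf{Step 1: disintegrate over $Z$ within the level set.} Because $\pi(Z)$ is a measurable function of $Z$, conditioning on $(Z=z, \pi(Z)=p, X=x)$ is the same as conditioning on $(Z=z, X=x)$ for any $z$ with $\pi(z)=p$. Hence
\begin{align}
\ex[\phi(U) \mid \pi(Z)=p, X=x] = \int \ex[\phi(U) \mid Z=z, X=x] \, dP(z \mid \pi(Z)=p, X=x).
\end{align}

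\textbf{Step 2: balancing.} By Bayes' rule,
\begin{align}
dP(z \mid \pi(Z)=p, X=1) \propto P(X=1 \mid Z=z)\, dP(z \mid \pi(Z)=p) = p\, dP(z \mid \pi(Z)=p).
\end{align}
The factor $p$ is constant on the level set, so $P(z \mid \pi(Z)=p, X=1) = P(z \mid \pi(Z)=p)$. The same argument with factor $1-p$ gives $P(z \mid \pi(Z)=p, X=0) = P(z \mid \pi(Z)=p)$. This is exactly $X \ci Z \mid \pi(Z)$. Because $p\in(0,1)$, both $X=1$ and $X=0$ have positive conditional probability at every $z$ in the level set, so the inner conditionals $P(U\mid Z=z,X=x)$ are well defined there.

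\textbf{Step 3: apply \cref{prop:z-match-multi} pointwise and integrate.} For each $z$ in the support of $P(\cdot \mid \pi(Z)=p)$, the monotonicity and log-supermodularity assumptions give $\ex[\phi(U)\mid Z=z,X=1] \ge \ex[\phi(U)\mid Z=z,X=0]$. Integrating this inequality against the common mixing measure $P(z\mid \pi(Z)=p)$ yields the claim. Since $\phi$ was an arbitrary non-decreasing function, this establishes the multivariate stochastic order.

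\textbf{Main obstacle.} The delicate point is measure-theoretic: $\{\pi(Z)=p\}$ is typically a null event. I would therefore phrase Steps 1--2 using regular conditional distributions, verifying the disintegration and balancing identities by integrating against test functions $g(\pi(Z))$. The identities then hold for $P_{\pi(Z)}$-almost every $p$, and the discrete case can be checked directly. One should also note explicitly that stochastic dominance is preserved under mixing with a \emph{common} mixing measure. It is precisely balancing that makes the mixing measure common; without it, the argument would fail, since the $X=1$ and $X=0$ arms would mix over different $z$-distributions.
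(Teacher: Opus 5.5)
Your proposal is correct and follows the paper's proof essentially step for step: the balancing property $X \ci Z \mid \pi(Z)$ makes both treatment arms mixtures of $P(U \mid Z=z, X=x)$ over the same measure $P(z \mid \pi(Z)=p)$, and applying \cref{prop:z-match-multi} pointwise and integrating yields the order. Your Bayes-rule derivation of balancing and your almost-everywhere caveat for the null event $\{\pi(Z)=p\}$ add care that the paper omits, but the route is the same.
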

The covariate balancing property ensures that within a level set $\{Z : \pi(Z) = p\}$, the distribution of $Z$ is the same across treated and untreated groups; hence the only systematic difference between the two groups is the one induced by $U$, which is exactly the shift quantified by \cref{prop:z-match-multi}. 
More generally, the same argument applies to any \emph{balancing score} $b(Z)$ satisfying $X \ci Z \mid b(Z)$, of which $Z$ itself and $\pi(Z)$ are the two extremes: $Z$ being the multi-dimensional, most granular score, and $\pi$ being the one dimensional, coarsest score.
In the context of our ICU example, this result means that pairs of patients with the same propensity for mechanical ventilation -- but where one was actually ventilated and the other was not -- exhibit a systematic shift in unobserved severity in the expected direction, under the same assumptions required for $Z$-matching. This is the property that makes $\pi$-matching practically attractive: it admits exact matches in a one-dimensional score even when $Z$ is high-dimensional, while preserving the structural guarantee of $Z$-matching.

\subsection{\texorpdfstring{$Z$}{Z}-Dominance}
The previous two results exploit conditional \emph{$X$-dominance}: conditioning on $X = 1$ vs. $X = 0$ at fixed $Z = z$ (or $\pi(Z) = p$) induces a specific shift in $U$.
In practice, however, exact matching on $Z=z$ may be difficult, especially when $Z$ is high-dimensional. Suppose instead we are able to find two units with $Z_i = z'$, $Z_j = z$, where $z' < z$ coordinatewise. Intuitively, the treated unit $i$ is smaller along each observed dimension, which should make it more likely that the reason for treatment lies in the hidden variables $U$.
This intuition relies on chaining two stochastic shifts. The first is the $Z$-matching shift of \cref{prop:z-match-multi}, comparing the two treatment groups at the same $Z$. The second is a new shift, comparing two values of $Z$ at a fixed treatment level $X =x$, called \emph{$Z$-dominance}:
\begin{align}
    P(U \mid Z = z', X = x) &\;\succeq_{st}\; P(U \mid Z = z, X = x) \quad \text{for } z' < z . \label{eq:z-shift-x}
\end{align}
The reasoning behind \cref{eq:z-shift-x} is that, among (un)treated units, smaller observed $z'$ means the propensity to be (not) treated must have been compensated for along some other axis -- namely $U$. Chaining either shift with $Z$-matched $X$-dominance gives the desired result in \cref{eq:zdom-stoch}.
We now identify when \cref{eq:z-shift-x} holds, captured in the following result: 
\begin{proposition}[$Z$-dominance, multivariate $U$]\label{prop:z-dom}
Assume conditions of \cref{prop:z-match-multi} and:
\begin{enumerate}[label=(\roman*), leftmargin=1.75em, nosep]
    \addtocounter{enumi}{2}
    \item For some $x \in \{0, 1\}$, defining $h^{(x)}(z, u) \coloneqq P(X = x \mid z, u)\, P(u \mid z)$, the function $\log h^{(x)}$ is twice differentiable and strictly positive, with
    \begin{align}
        \frac{\partial^2 \log h^{(x)}}{\partial u^{(j)}\, \partial u^{(k)}}(z, u) &\;\geq\; 0 \quad \text{for all } j \neq k, \quad
        \frac{\partial^2 \log h^{(x)}}{\partial z^{(l)}\, \partial u^{(j)}}(z, u) \;\leq\; 0 \quad \text{for all } l, j. \label{eq:across-zu}
    \end{align} \label{cond:z-dom-cross}
\end{enumerate}
Then, for $z' < z$,
    $P(U \mid Z = z', X = 1) \;\succeq_{st}\; P(U \mid Z = z, X = 0)$.
\end{proposition}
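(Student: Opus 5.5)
The plan is to chain two stochastic shifts through the intermediate law $P(U \mid Z = z', X = 0)$ or $P(U \mid Z = z, X = 1)$, depending on which $x$ satisfies condition~\ref{cond:z-dom-cross}. Suppose it holds for $x = 1$. First show the $Z$-dominance shift \cref{eq:z-shift-x} at $x=1$:
\begin{align}
P(U \mid Z = z', X = 1) \succeq_{st} P(U \mid Z = z, X = 1).
\end{align}
Then invoke \cref{prop:z-match-multi} at the point $z$, which is allowed because its hypotheses are assumed:
\begin{align}
P(U \mid Z = z, X = 1) \succeq_{st} P(U \mid Z = z, X = 0).
\end{align}
Since $\succeq_{st}$ is transitive (both inequalities hold for each coordinatewise non-decreasing $\phi$), the claim follows. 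If instead condition~\ref{cond:z-dom-cross} holds for $x=0$, chain through $P(U \mid Z=z', X=0)$. Apply \cref{prop:z-match-multi} at $z'$ to get $P(U\mid z',X=1)\succeq_{st}P(U\mid z',X=0)$. Then establish $P(U\mid z',X=0)\succeq_{st}P(U\mid z,X=0)$ from the $x=0$ version of \cref{eq:z-shift-x}.

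The substantive step is proving \cref{eq:z-shift-x}. Note that the conditional density is
\begin{align}
P(u \mid z, X = x) = \frac{h^{(x)}(z,u)}{\int h^{(x)}(z,v)\,dv}.
\end{align}
The normalizing constant depends only on $z$, so it cancels in any cross-ratio. Set $f(u) := P(u\mid z', x)$ and $g(u) := P(u\mid z, x)$. The target is the multivariate likelihood-ratio (Holley/FKG-type) condition
\begin{align}
f(u \vee v)\, g(u \wedge v) \ge f(u)\, g(v) \quad \text{for all } u, v,
\end{align}
which implies $f \succeq_{st} g$ by Holley's inequality. Equivalently, consider the joint function $H(t,u) := \log h^{(x)}(\tilde z(t), u)$ on the lattice $\{z > z'\}\times\R^d$, with the $z$-coordinates reversed, i.e.\ $t = -z$. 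The two conditions in \cref{eq:across-zu} say exactly that every mixed second partial of $H$ in distinct coordinates of $(t,u)$ is nonnegative. For mixed $z^{(l)}$--$z^{(m)}$ partials this is not required, since we compare only two fixed points $z' \le z$ and these lie on a chain in the $t$-coordinates. By the standard characterization (Topkis), nonnegative cross partials of a twice-differentiable function give supermodularity of $H$ on the product lattice. In particular
\begin{align}
H(-z', u\vee v) + H(-z, u\wedge v) \ge H(-z', u) + H(-z, v),
\end{align}
using $(-z',u)\vee(-z,v) = (-z', u\vee v)$ and $(-z',u)\wedge(-z,v) = (-z, u\wedge v)$. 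Exponentiating and dividing out the $z$-only normalizers gives the Holley condition.

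The main obstacle is making this supermodularity step rigorous. One part is the integration of mixed partials along a path: go first in $u$-directions at fixed $z$, then in $z$. This requires $\log h^{(x)}$ to be smooth and positive everywhere, which is why strict positivity is assumed. The other part is citing Holley's inequality in a form that covers densities on $\R^d$ rather than finite lattices, which may need an approximation argument or a direct appeal to the continuous version. The remaining steps, namely transitivity of $\succeq_{st}$ and invoking \cref{prop:z-match-multi}, are routine.
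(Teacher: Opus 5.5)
Your proposal is correct and follows the paper's proof essentially step for step. It uses the same two chains, through $P(U\mid z',X=0)$ or $P(U\mid z,X=1)$, with \cref{prop:z-match-multi} for the $X$-link and the Karlin--Rinott (Holley-type) lattice condition for the $Z$-shift link, which is verified by integrating the cross partials in \cref{eq:across-zu} along an axis-aligned path. Your supermodularity formulation is slightly more careful than the paper: the paper calls the case $u \ge u'$ ``trivial,'' but that case is exactly where the across-$Z$-$U$ inequality is needed. Your remark that no $z$-$z$ partials are required because $z'<z$ are comparable becomes airtight if you parametrize the segment from $-z$ to $-z'$ by a scalar $\lambda$ before invoking Topkis.
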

\paragraph{Interpreting condition \ref{cond:z-dom-cross}: collider vs. bidirected.}
The inequalities in \ref{cond:z-dom-cross} become more transparent once $\log h^{(x)}$ is split into the contributions from the two channels through which $Z$ and $U$ interact. Writing
\begin{align}
    \log h^{(x)}(z, u) \;=\; \underbrace{\log P(X = x \mid z, u)}_{T_C \; (\text{collider channel: } Z \to X \leftarrow U)} \;+\; \underbrace{\log P(u \mid z)}_{T_B \; (\text{bidirected channel: } Z \bidir U)},
\end{align}
each cross derivative in \ref{cond:z-dom-cross} decomposes into a \emph{collider contribution} $T_C$ from the propensity term and a \emph{bidirected contribution} $T_B$ from the marginal $Z$-$U$ dependence (see \cref{fig:dag}). Under our assumptions, these two contributions carry opposite signs, so each inequality reduces to a question of which dominates.
Consider first the within-$U$ part of \cref{eq:across-zu}. Under monotonicity of $\pi$ in $u$ (assumption \ref{cond:pi-mono}), $T_C$ tends to be log-submodular in $u$, contributing a non-positive cross derivative. By assumption \ref{cond:puz-super}, $T_B$ is log-supermodular in $u$, contributing a non-negative one. The within-$U$ inequality therefore asks that $T_B$ dominates: the positive marginal dependence among the $U$ coordinates must be strong enough to survive the collider's re-weighting through $T_C$.
The across-$Z$-$U$ inequality flips both the required sign and the dominant channel. Under \ref{cond:pi-mono}, the cross derivative of $T_C$ in $z$-$u$ is typically non-positive (the collider couples $Z$ and $U$ negatively at fixed $X$); whenever $Z$ and $U$ are marginally positively dependent, the cross derivative of $T_B$ is non-negative. The across-$Z$-$U$ inequality therefore asks that $T_C$ dominates: conditioning on $\{X = x\}$ must be strong enough to flip the marginal positive $Z$-$U$ dependence into a net negative one.

In the context of our ICU example, condition \ref{cond:z-dom-cross} constrains how the propensity for mechanical ventilation interacts with the marginal dependence between observed severity (e.g., SOFA score, oxygen saturation) and unobserved severity (e.g., perceived difficulty breathing, hemodynamic instability not captured in $Z$). It does not require $Z$ and $U$ to be independent, and the two-channel decomposition above gives a concrete reading of when it is more or less plausible. A natural regime in which \ref{cond:z-dom-cross} is satisfied is one where $Z$ and $U$ correspond to different organ systems, e.g., $Z$ captures cardiovascular markers while $U$ captures respiratory markers. In such a setting, intra-system severities (within $U$, e.g., difficulty breathing and respiratory rate) tend to co-vary tightly, supporting the within-$U$ inequality, while cross-system severities (between $Z$ and $U$) are positively but more loosely related, leaving room for the collider channel to overturn the marginal $Z$-$U$ dependence. The opposite regime is also possible: if $Z$ and $U$ relate to the same organ system, the marginal $Z$-$U$ dependence may be strong enough that the collider cannot overturn it, in which case the across-$Z$-$U$ inequality fails and $Z$-dominance offers no guarantee. Whether \ref{cond:z-dom-cross} holds is therefore an empirical question about how the observed and unobserved severity markers partition across systems, and we view it as an assumption the analyst should reason about in light of the specific covariates available, rather than one that is guaranteed by the clinical setting alone. Put formally, \cref{prop:z-dom} may fail if there is either strong correlation of $Z, U$ or a strong interaction of $Z, U$ in $f_X$ (see \cref{ex:zdom-fail-interaction,ex:zdom-fail-correlation}).
Finally, we remark that the chain argument of \cref{prop:z-dom} extends to $\pi$-matching with a single, scalar version of assumption \ref{cond:z-dom-cross}, as discussed in \cref{appendix:pi-dom}.

\subsection{Extraction Strategies}\label{sec:extraction}
We now turn to the second part of our framework: how the proposed pair $(i, j)$ is processed by the expert annotator. We decompose this into three stages: extraction, selection, and success. Together, these stages determine the probability that the elicited explanation corresponds to a genuine UC.
\paragraph{Extraction.}
Given a pair $(i, j)$ with $X_i = 1, X_j = 0$, the expert produces a candidate set $C_\eps(i, j)$ of explanations for the differential treatment. We emphasize that an explanation is inherently \emph{contrastive}: a variable $V^{(l)}$ (observed or unobserved) is a candidate explanation only if $V^{(l)}_i > V^{(l)}_j$. The set $C_\eps(i, j)$ is therefore a subset of the variables along which $i$ exceeds $j$, generated according to an \emph{extraction strategy} $\eps$. 
Under \emph{perfect extraction} $\eps_\mathrm{perf}$, the extracted set correctly includes all such variables: $C_{\eps_\mathrm{perf}}(i, j) = \{V^{(l)} : V^{(l)}_i > V^{(l)}_j\}$.
In a more general setting, such as \emph{ablation-based extraction} $\eps_\mathrm{abl}$ (natural when the expert is a large language model performing counterfactual reasoning over textual notes), $C_\eps(i, j)$ may contain \emph{hallucinated} variables that did not drive the decision, or \emph{omit} genuine drivers that influenced the decision.

\paragraph{Selection.} 
Given $C_\eps(i, j)$, we assume the expert selects a single explanation uniformly at random among all candidates they consider plausible. These include unobserved candidates $U^{(l)}$, and observed coordinates $Z^{(k)}$ along which $i$ exceeds $j$. The probability that the expert selects an unobserved variable ($E = 1$) is therefore
\begin{align}
    P(E = 1 \mid i, j) = \frac{|C_\eps(i, j) \setminus Z|}{|C_\eps(i, j)|}. \label{eq:selection}
\end{align}

\paragraph{Success.} 
Selection alone is  not sufficient, as the selected variable must correspond to a genuine UC to be informative. We say the extraction is \emph{accurate} ($A = 1$) when $E = 1$ and the selected variable $c$ satisfies $U^{(c)}_i > U^{(c)}_j$.
Under perfect extraction, accuracy simplifies to
\begin{align}
    P(A = 1 \mid i, j, \eps_\mathrm{perf}) = \frac{N(i, j)}{N(i, j) + D(i, j)}, \label{eq:acc-perf}
\end{align}
with $N(i, j) = |\{U^{(l)} : U^{(l)}_i > U^{(l)}_j\}|$ counting the unobserved drivers, and $D(i, j) = |\{Z^{(k)} : Z^{(k)}_i > Z^{(k)}_j\}|$ counting the competing observed explanations. A good matching strategy should maximize $N$ (number of active UCs) and minimize $D$ (number of active known confounders).
Our first result under the rubric of \emph{strategy dominance} shows that $Z$-dominance provably increases accuracy compared to $Z$-matching under perfect extraction:
\begin{theorem}[$Z$-dominance dominates $Z$-matching]\label{thm:zdom-beats-zmatch}
Fix $Z = z$ and assume conditions \ref{cond:pi-mono}--\ref{cond:z-dom-cross} of \cref{prop:z-dom} hold at $X = 1$. Then
\begin{align} \label{eq:z-dom-above-match}
    \ex[A \mid \Mzdom, Z_i = z', Z_j = z, \eps_\mathrm{perf}] \;\geq\; \ex[A \mid \Mzmatch, Z_i = Z_j = z, \eps_\mathrm{perf}] \; \forall z' < z.
\end{align}
\end{theorem}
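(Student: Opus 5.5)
Under perfect extraction, the accuracy of a pair is $A = N/(N+D)$. Here $N(i,j)$ counts the coordinates $U^{(l)}$ with $U^{(l)}_i > U^{(l)}_j$, and $D(i,j)$ counts the coordinates $Z^{(k)}$ with $Z^{(k)}_i > Z^{(k)}_j$. The plan has two parts. First, show that the observed competitor count $D$ is identically zero under both strategies, so that $A$ reduces to a function of $U_i$ and $U_j$ alone. Second, show that this function is coordinatewise non-decreasing in $U_i$ and then invoke the $Z$-dominance shift (\cref{eq:z-shift-x}) at $X=1$.

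\textbf{Step 1 (reduction).} Under $\Mzmatch$ we have $Z_i = Z_j = z$, and under $\Mzdom$ we have $Z_i = z' < z = Z_j$ coordinatewise. In both cases no $k$ satisfies $Z^{(k)}_i > Z^{(k)}_j$, so $D = 0$. Hence $A = \mathbb{1}\{N(i,j) \geq 1\} = \mathbb{1}\{\exists l: U^{(l)}_i > U^{(l)}_j\}$, with the convention $0/0 = 0$ when no candidate exists. Both expectations are therefore of the form $\ex[\psi(U_i, U_j)]$, where $\psi(a,b) = \mathbb{1}\{\exists l: a^{(l)} > b^{(l)}\}$.

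\textbf{Step 2 (independence and monotonicity).} I will treat units $i$ and $j$ as independent draws. Then $U_i \sim P(U \mid Z = z_i, X = 1)$ and $U_j \sim P(U \mid Z = z, X = 0)$ independently, and the distribution of $U_j$ is the same under both strategies. For fixed $b$, the map $a \mapsto \psi(a,b)$ is coordinatewise non-decreasing, because raising any coordinate of $a$ can only create new strict exceedances. Condition on $U_j = b$ and compare the distribution of $U_i$ under the two strategies: $P(U \mid Z = z', X = 1)$ for $\Mzdom$ versus $P(U \mid Z = z, X = 1)$ for $\Mzmatch$. By \cref{eq:z-shift-x} with $x=1$, which is exactly what \cref{prop:z-dom} establishes under condition \ref{cond:z-dom-cross} at $X=1$, the first distribution stochastically dominates the second. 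The definition of $\succeq_{st}$ then gives $\ex[\psi(U_i, b) \mid \Mzdom] \geq \ex[\psi(U_i, b) \mid \Mzmatch]$ for every $b$. Integrating over $b \sim P(U \mid Z=z, X=0)$ yields \cref{eq:z-dom-above-match}.

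\textbf{Main obstacle.} The delicate point is extracting the intermediate shift \cref{eq:z-shift-x} at $X=1$ from the hypotheses. \Cref{prop:z-dom} is stated as the chained conclusion, not the fixed-$X$ shift, so I would re-derive that shift directly. Assumption \ref{cond:z-dom-cross} makes $h^{(1)}(z,u) \propto P(u \mid z, X=1)$ log-supermodular in $u$, and it makes the $z$–$u$ cross derivatives non-positive. By reparametrizing $z \mapsto -z$, the joint function becomes log-supermodular in $(-z, u)$. The multivariate MLR / Holley–FKG argument used for \cref{prop:z-match-multi} then gives $P(U \mid z', X=1) \succeq_{st} P(U \mid z, X=1)$ for $z' < z$. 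Because $\psi$ is an indicator of an upper set in $a$, only this standard stochastic order is needed, with no further structure.
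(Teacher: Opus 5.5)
Your proposal is correct and follows the paper's proof essentially step for step: $D=0$ under both strategies, so $A=\mathbb{1}\{N\ge 1\}$; the law of $U_j$ is the same under both strategies; $u\mapsto \mathbb{1}\{\exists\, l: u^{(l)}>b^{(l)}\}$ is coordinatewise non-decreasing; and the fixed-$X=1$ shift $P(U\mid z',X=1)\succeq_{st}P(U\mid z,X=1)$ closes the argument (this is the $X=1$ link of the chain in the proof of \cref{prop:z-dom}, obtained from condition \ref{cond:z-dom-cross} via the four-point inequality and \cref{lem:kr}). One small correction to your re-derivation of that shift: for multivariate $z$ you cannot claim log-supermodularity jointly in $(-z,u)$, since nothing constrains the within-$z$ cross partials, but you only need the four-point inequality on the comparable pair $z'<z$, and that follows from the within-$u$ and $z$--$u$ inequalities alone.
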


\subsection{\texorpdfstring{$\pi$}{Pi}-Matching Beats Marginal Matching}
\label{sec:linearized}
While \cref{thm:zdom-beats-zmatch} establishes strategy dominance for $Z$-dominance over $Z$-matching, proving a similar result for $\pi$-matching via the non-linear success probability $N / (N + D)$ requires strong assumptions over a two-dimensional $(N, -D)$ distribution. To make strategy comparisons more tractable, we introduce a linearized surrogate utility
\begin{align} \label{eq:lin-util}
    \mathcal{U}_\alpha(\M) \;\coloneqq\; \ex[N - \alpha D \mid \M], 
    \qquad \alpha > 0,
\end{align}
which retains a qualitative interpretation similar to $N/(N+D)$. For our analysis, we quantify a variable's predictive power via its marginal Area Under Curve (AUC), defined inline as $\aucsel(V^{(k)}) \coloneqq P(V^{(k)}_i > V^{(k)}_j \mid X_i = 1, X_j = 0)$ for independent draws $V_i \sim P(V \mid X = 1)$ and $V_j \sim P(V \mid X = 0)$. For the observables, $\aucsel(Z^{(k)})$ values can be computed, while for the UCs, $\aucsel(U^{(l)})$ are unknown. AUC values equal to $1/2$ correspond to independence $V^{(k)}\ci X$.

\begin{theorem}[$\pi$-matching Dominates Marginal Matching under $\mathcal{U}_\alpha$]
\label{thm:pi-beats-rand}
Assume that the conditions of \cref{prop:pi-match-multi} hold and that each marginal $Z^{(k)}$ is continuous. Then
\begin{align}
    \mathcal{U}_\alpha(\Mpimatch) - \mathcal{U}_\alpha(\Mmarg) 
    \;\geq\; 
    \alpha \sum_{k=1}^{d_Z} \big(\aucsel(Z^{(k)}) - \tfrac{1}{2}\big) 
    - \sum_{l=1}^{d_U} \big(\aucsel(U^{(l)}) - \tfrac{1}{2}\big).
    \label{eq:pi-beats-rand}
\end{align}
\end{theorem}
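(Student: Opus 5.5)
By linearity of expectation, $\mathcal{U}_\alpha(\M) = \sum_l P(U^{(l)}_i > U^{(l)}_j \mid \M) - \alpha \sum_k P(Z^{(k)}_i > Z^{(k)}_j \mid \M)$. The difference in \cref{eq:pi-beats-rand} therefore splits coordinate by coordinate into a $U$-part and a $Z$-part, and I will bound each separately.

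\textbf{The $U$-part.} Under $\Mmarg$, the pair is $i \sim P(\cdot \mid X=1)$ and $j \sim P(\cdot\mid X=0)$, drawn independently. By definition, $P(U^{(l)}_i > U^{(l)}_j \mid \Mmarg) = \aucsel(U^{(l)})$. For $\Mpimatch$, the natural model is: draw $p$ from a matching distribution over level sets, then draw $i,j$ independently from $P(\cdot \mid \pi=p, X=1)$ and $P(\cdot\mid \pi=p,X=0)$. \Cref{prop:pi-match-multi} gives multivariate stochastic dominance, hence also marginal dominance of $U^{(l)}$. For independent $A \succeq_{st} B$ we have $P(A > B) + \tfrac12 P(A=B) \ge \tfrac12$. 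I would either assume continuity of $U$ or use the tie-corrected count, so that $P(U^{(l)}_i > U^{(l)}_j \mid \Mpimatch) \ge \tfrac12$. The $U$-part of the difference is then at least $\sum_l (\tfrac12 - \aucsel(U^{(l)}))$, which is exactly the second term of the bound.

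\textbf{The $Z$-part.} The covariate balancing property $X \ci Z \mid \pi(Z)$ says that, given $\pi = p$, $Z_i$ and $Z_j$ are i.i.d.\ from $P(Z\mid \pi=p)$. Hence $Z^{(k)}_i$ and $Z^{(k)}_j$ are i.i.d.\ and continuous, and by symmetry $P(Z^{(k)}_i > Z^{(k)}_j \mid \pi = p) = \tfrac12$. Continuity of the marginals ensures there are no ties, and the conditional laws inherit continuity for a.e.\ $p$. Averaging over $p$ gives exactly $\tfrac12$. Under $\Mmarg$ the probability is $\aucsel(Z^{(k)})$. So the $Z$-part contributes $-\alpha\sum_k(\tfrac12 - \aucsel(Z^{(k)})) = \alpha\sum_k(\aucsel(Z^{(k)}) - \tfrac12)$, with equality.

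\textbf{Combining and the main obstacle.} Adding the two parts yields \cref{eq:pi-beats-rand}. The main obstacle is the $U$-part: first, fixing precisely how $\Mpimatch$ samples the level set $p$ and the pair within it, so that conditional independence of $i,j$ given $p$ holds; second, handling ties in $U$, where stochastic dominance only yields $P(A>B)\ge P(B>A)$. I would first assume each $U^{(l)}$ is continuous conditional on $(\pi,X)$. If ties must be allowed, I would state the bound with $N$ counting ties with weight $\tfrac12$. Finally, I would verify the step $P(A>B)\ge P(B>A)$ by conditioning on $B$ and integrating the survival-function inequality.
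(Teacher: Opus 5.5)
Your route is the same as the paper's. Linearity reduces $\mathcal{U}_\alpha$ to per-coordinate comparison probabilities, and the $\Mmarg$ terms are the AUCs by definition. The $U$-terms under $\Mpimatch$ are bounded below by $\tfrac12$ via \cref{prop:pi-match-multi} and marginal stochastic dominance, and the $Z$-terms are handled by covariate balancing.

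\textbf{A false step in the $Z$-part.} The claim that the conditional laws of $Z^{(k)}$ given $\pi(Z)=p$ ``inherit continuity'' does not hold. Suppose $\pi(z)$ depends on $z$ only through $z^{(1)}$ and is strictly increasing in it, e.g.\ $\pi(z,u)=\sigma(z^{(1)}+u)$ with $U \ci Z$. Then the level set $\{\pi(Z)=p\}$ fixes $Z^{(1)}$, so $Z^{(1)}_i=Z^{(1)}_j$ in every $\pi$-matched pair and $P(Z^{(1)}_i>Z^{(1)}_j \mid \Mpimatch)=0$, even though $Z^{(1)}$ is marginally continuous. The paper's proof asserts the same equality $\tfrac12$. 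The repair is one line, and it makes the continuity hypothesis on $Z$ unnecessary. Given $\pi=p$ the two draws are i.i.d., so $P(Z^{(k)}_i>Z^{(k)}_j\mid p)=P(Z^{(k)}_j>Z^{(k)}_i\mid p)\le\tfrac12$, hence $\ex[D\mid\Mpimatch]\le d_Z/2$. Since $D$ enters with coefficient $-\alpha<0$, this is exactly the direction you need. The $Z$-part then contributes at least, rather than exactly, $\alpha\sum_k(\aucsel(Z^{(k)})-\tfrac12)$.

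\textbf{Your caution about ties in $U$ is warranted.} The paper's proof applies the probability integral transform to ``continuous'' $A,B$, but the statement assumes continuity only for $Z$. Without a continuity assumption on $U$, the bound is false. Take:
\begin{itemize}
\item $d_U=1$ and $U\sim\mathrm{Bern}(q)$ independent of $(Z,X)$, which satisfies the hypotheses of \cref{prop:pi-match-multi};
\item $Z\sim\mathcal{N}(0,I_2)$ and $\pi(z,u)=\sigma(z^{(1)}+z^{(2)})$, so each $Z^{(k)}$ remains continuous on the level lines.
\end{itemize}
Then $\ex[N\mid\Mpimatch]=\ex[N\mid\Mmarg]=\aucsel(U)=q(1-q)$ and $\ex[D\mid\Mpimatch]=d_Z/2$. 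The claimed inequality reduces to $0\ge\tfrac12-q(1-q)$, which fails for every $q$ and every $\alpha$.

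So you should do one of the following:
\begin{itemize}
\item add, as a hypothesis, that each $U^{(l)}$ is continuous given $(\pi(Z),X)$;
\item count ties with weight $\tfrac12$ in both $N$ and $\aucsel(U^{(l)})$, the latter so that $\ex[N\mid\Mmarg]$ still equals the AUC.
\end{itemize}
With either fix, together with the exchangeability bound for $Z$ above, your argument is complete.
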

This theorem provides actionable insight into when $\pi$-matching outperforms marginal matching. Under \cref{prop:pi-match-multi}, $\pi$-matching guarantees a lower bound of $\ex[N \mid \Mpimatch] \geq d_U / 2$. While this expectation is generally smaller than marginal matching's $\ex[N \mid \Mmarg] = \sum_{l=1}^{d_U} \aucsel(U^{(l)})$, $\pi$-matching reduces the expected number of observed competitors strictly to $\ex[D \mid \Mpimatch] = 1/2$, compared to $\ex[D \mid \Mmarg] = \sum_{k=1}^{d_Z} \aucsel(Z^{(k)})$.
\cref{thm:pi-beats-rand} gives us that
\begin{align} \label{eq:when-pi-wins}
    \frac{\sum_{l=1}^{d_U} \big(\aucsel(U^{(l)}) - \tfrac{1}{2}\big)}{\sum_{k=1}^{d_Z} \big(\aucsel(Z^{(k)}) - \tfrac{1}{2}\big)} \leq 1 \implies \ex[N-D \mid \Mpimatch] \geq \ex[N-D \mid \Mmarg].
\end{align}
In other words, $\pi$-matching dominates when the amount of variation explained by $Z$ is large compared to the variation explained by $U$ (e.g., when $d_Z \gg d_U$, meaning most confounders are known; or when $Z^{(k)}$ AUCs are large compared to $U^{(l)}$). We verify this in the experiments.
\section{Experiments}\label{sec:experiments}
We first verify \cref{thm:zdom-beats-zmatch,thm:pi-beats-rand} on synthetic data (\cref{sec:exp-synth}), then evaluate our framework on a semi-synthetic dataset derived from MIMIC-III (\cref{sec:exp-semisynth}).

\subsection{Synthetic verification}\label{sec:exp-synth}
In both experiments, $(Z, U)$ are Gaussians $\mathcal{N}(0, \Sigma)$ with $d_Z = d_U = 3$, $Z$-block has $\Sigma_Z = I$, $U$-block $\Sigma_U = \sigma_\eps^2 I + \tau^2 \mathbf{1}\mathbf{1}^\top$. Setting $\sigma_\eps^2 = 0.4$, and $\tau^2 = 0.6$ makes $P(U \mid Z = z)$ log-supermodular, and propensity is given by $\pi(z, u) = \sigma(\alpha + \beta^\top z + \gamma^\top u)$ with $\alpha = -1$. The cross-block covariance $\Sigma_{ZU}$ is the key parameter determining whether assumptions of \cref{prop:z-dom,prop:pi-match-multi} hold true. We work under $\eps_\mathrm{perf}$ throughout and estimate expectations by Monte Carlo (\cref{fig:synthetic}).

\paragraph{\cref{thm:zdom-beats-zmatch} (Panels 1--2).} With $\beta = \gamma = \mathbf{1}$, $z = 0$, and $z' = -\delta \cdot \mathbf{1}$, we estimate $\ex[A \mid \M, \eps_\mathrm{perf}]$ for $\Mzmatch$ and $\Mzdom$ by sampling $U_i \sim P(U \mid Z, X = 1)$ and $U_j \sim P(U \mid Z = z, X = 0)$ via rejection from $\mathcal{N}(\mu(z), \Sigma_{U \mid Z})$ with acceptance $\pi^X(1-\pi)^{1-X}$. Under $\eps_\mathrm{perf}$, both strategies have $D = 0$, so $A = \mathbb{1}\{N \geq 1\}$. 
In Panel 1, $\Sigma_{ZU} = \rho I$ with $\rho = 0.05$, which means mild $Z$-$U$ dependence, and that \ref{cond:z-dom-cross} holds at $X = 1$. In Panel 2, $\rho = 0.55$, implying strong $Z$-$U$ dependence, and the across-$Z$-$U$ inequality flips at $X = 1$.
Panel~1 confirms \cref{eq:z-dom-above-match}: $\Mzdom$ dominates $\Mzmatch$ as $\delta$ grows under \cref{prop:z-dom}. Panel~2 illustrates the failure mode: when condition \ref{cond:z-dom-cross} is violated, for larger values of the dominance gap $\delta$, $Z$-matching outperforms $Z$-dominance.

\paragraph{\cref{thm:pi-beats-rand} (Panels 3--4).} 
Using the same parametric family, we fix $\beta = 0.2 \cdot \mathbf{1}$, $\gamma = 0$ and sweep $\Sigma_{ZU} = c \cdot \mathbf{1}\mathbf{1}^\top$ with $c \in [0, 0.43]$ (upper end keeps $P(U \mid Z)$ log-supermodular). Each $c$ pins down a value of the AUC ratio
$\kappa \coloneqq \sum_{l} (\aucsel(U^{(l)}) - \tfrac12) / \sum_{k} (\aucsel(Z^{(k)}) - \tfrac12)$
appearing in \cref{eq:when-pi-wins}, which we estimate from a simulated population ($n = 10^5$) using random treated/untreated pairs. We then collect $\Mpimatch$ pairs by rejection on $|\pi(Z_i) - \pi(Z_j)| < 0.5\%$, with $\pi(Z)$ computed by marginalizing out $U$ in the $\sigma$ function. We plot $\ex[N - D \mid \M]$ for both $\Mpimatch$ and $\Mmarg$ as a function of $\kappa$. 
Parameter $\gamma = 0$ is chosen so \cref{eq:pi-beats-rand} holds with equality and the threshold $\kappa = 1$ is sharp, meaning that $\ex[N - D \mid \Mpimatch] - \ex[N - D \mid \Mmarg] = (1 - \kappa) \sum_k (\aucsel(Z^{(k)}) - \tfrac12)$, so the linearized utility gap $\U_\alpha(\Mpimatch) - \U_\alpha(\Mmarg)$ from \cref{eq:lin-util} changes sign exactly at $\kappa = 1$. 
Panel~3 for $\kappa \leq 1$ confirms this, with $\Mpimatch$ above $\Mmarg$, as predicted. Panel~4 for $\kappa > 1$ shows that the inequality indeed reverses, as predicted by the theoretical result.

\begin{figure}
    \centering
    \includegraphics[width=\linewidth]{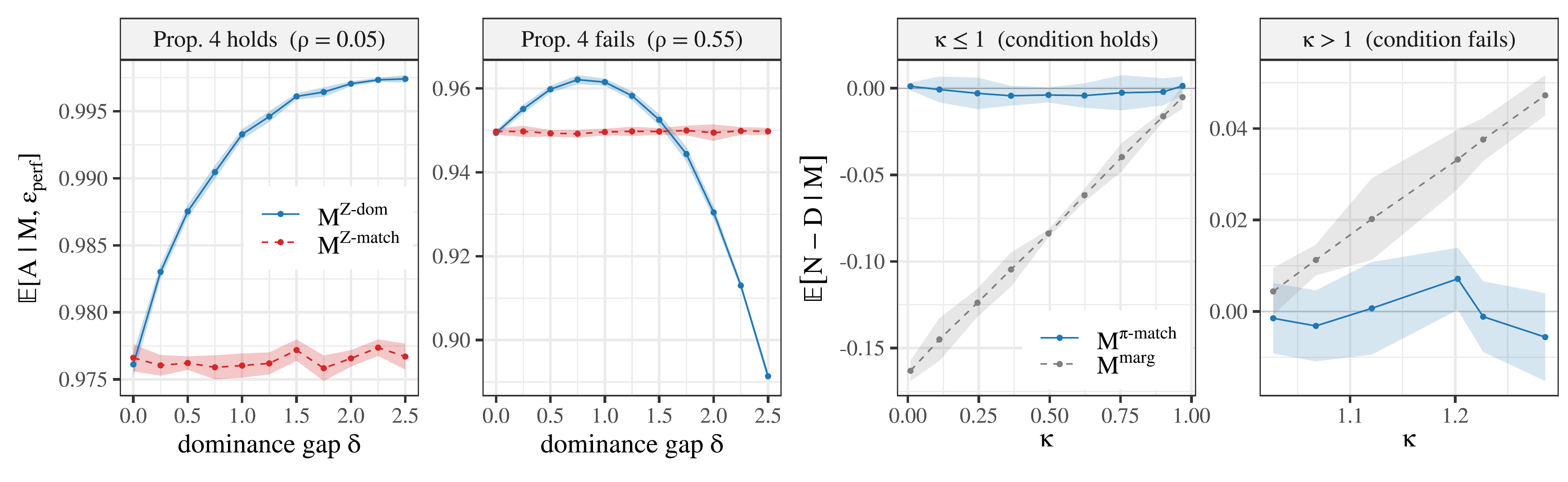}
    \caption{Synthetic verification. \emph{Panels 1--2}: $\Mzdom$ vs.\ $\Mzmatch$ as a function of the dominance gap $\delta$, with \ref{cond:z-dom-cross} holding ($\rho = 0.05$) and failing ($\rho = 0.55$). \emph{Panels 3--4}: $\Mpimatch$ vs.\ $\Mmarg$ as a function of $\kappa$, in the regimes $\kappa \leq 1$ and $\kappa > 1$. Bands are 95\% across-rep CIs.}
    \label{fig:synthetic}
\end{figure}

\subsection{Semi-synthetic experiments on MIMIC-III}\label{sec:exp-semisynth}

\paragraph{Data construction.} We build MIMIC-III-SeS by using 12 observed covariates $Z$ (resp.\ rate, MAP, lactate, P/F ratio, $p\mathrm{CO}_2$, $p\mathrm{O}_2$, $\mathrm{O}_2$ saturation, age, sex, Charlson, SOFA, plus admission-diagnosis indicators). The unobserved confounders $U$ dimension is $d_U=10$, and all are binary, extracted from clinical notes via UMLS entity linking with negation detection \citep{neumann2019scispacy}. The concepts include pleural effusion, heart failure, dyspnea, pneumonia, pulmonary edema, Bloom syndrome, hypoxia, atrial fibrillation, hypertensive disease, and atelectasis. 
We fit a logistic propensity $\pi(Z, U) = \sigma(\alpha_X + \beta_X^\top Z + \gamma_X^\top U)$ with $\gamma_X^{(l)} \sim \mathrm{Unif}[0.3, 0.4]$ (so $\pi$ is monotone in $u$) and a logistic outcome model with coefficient $\gamma_Y^{(l)} \sim \mathrm{Unif}[0.1, 0.2]$ for $U$ and $\gamma_{X\to Y}=-0.2$ for $X$, implying treatment $X$ is protective. 
We sample new $(\tilde X, \tilde Y)$ from this model, making data $(Z, \tilde X, \tilde Y)$ semi-synthetic. $U$ itself is never exposed to the matching strategies.

\paragraph{Strategy implementation.} For a practical implementation of $\Mzmatch$, we compute the Euclidean distance on standardized $Z$ values for every pair $(i, j)$ with $X_i = 1, X_j = 0$, sort in ascending order, and take the top $B$ pairs, selecting pairs closest in $Z$-space. 
$\Mzdom$ counts the coordinates where $i$ exceeds $j$ (margin $0.2\sigma$), and picks pairs where this score is smallest.
$\Mpimatch$ considers ascending $|\hat\pi(Z_i) - \hat\pi(Z_j)|$ values, with $\hat\pi$ estimated by 5-fold cross-validated logistic regression on $Z$;
$\Mpidom$ (see \cref{appendix:pi-dom}) considers ascending $\hat\pi(Z_i) - \hat\pi(Z_j)$. $\Mmarg$ considers a random order over the data.
We vary the budget $B$, which represents the number of pairs proposed by a strategy. For each selected pair we compute the probability of success $\lambda(i,j) \coloneqq P(A = 1 \mid i, j, \eps_\mathrm{perf})$ as in \cref{eq:acc-perf}. The cumulative average of $\lambda$ traces how successful the strategies are with increasing budget sizes. 

\paragraph{Results.} \Cref{fig:semisynth} (left) shows the cumulative success rate vs.\ budget $B$. The $Z$-based strategies $\Mzmatch$ and $\Mzdom$ achieve the highest success, followed by $\Mpimatch$ and $\Mpidom$, while $\Mmarg$ has lowest success throughout. 
These results empirically verify \cref{thm:zdom-beats-zmatch,thm:pi-beats-rand}. 
Further, \Cref{fig:semisynth} (right) reports mean $\lambda(i, j)$ averaged within strata of $|\hat\pi(Z_i) - \hat\pi(Z_j)|$ (low $\to$ high), and shows that $\bar\lambda$ is highest when the propensity gap is small and decreases monotonically as the gap widens (the red, dashed horizontal line shows the overall $\lambda$ average in the dataset, corresponding to the success of $\Mmarg$). This figure empirically verifies \cref{thm:pi-beats-rand}: pairs closely matched in $\hat\pi$ carry higher success probability than randomly drawn pairs, justifying $\Mpimatch$ over $\Mmarg$. The above results show that the theoretical implications developed in the manuscript hold even when the $(Z,U)$ structure is inherited from real-world clinical data.

\begin{figure}
    \centering
    \includegraphics[width=\linewidth]{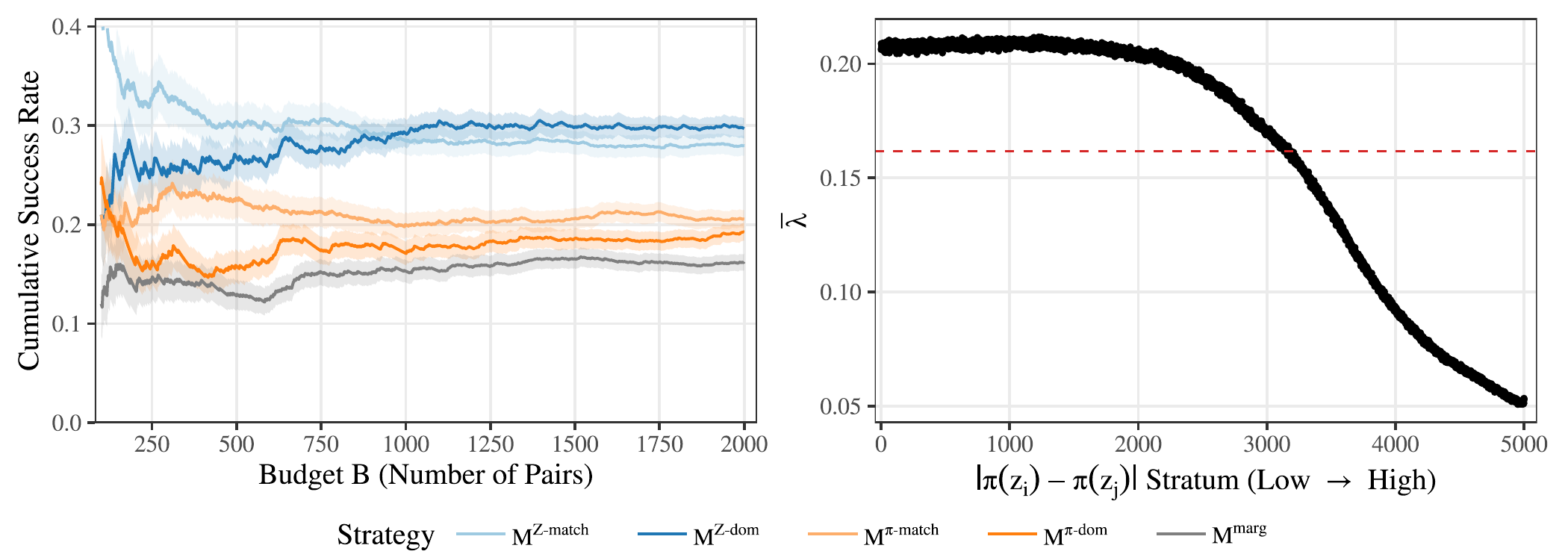}
    \caption{Semi-synthetic results on MIMIC-III-SeS. \emph{Left}: cumulative success rate vs.\ budget $B$ under $\eps_\mathrm{perf}$. \emph{Right}: mean $\bar\lambda$ per stratum of $|\pi(z_i) - \pi(z_j)|$ (low $\to$ high) on the $\Mpimatch$-ranked pool of unit pairs; dashed red is the overall mean.}
    \label{fig:semisynth}
\end{figure}

\subsection{Real-Data: Mechanical Ventilation in MIMIC-III}\label{sec:real}
We now apply our framework to the ICU setting that motivated the paper (Ex.~\ref{ex:mv-ett}). Recall that the EHR-based ETT estimate of mechanical ventilation on in-hospital mortality is strongly above zero in every database (Fig.~\ref{fig:ett}), a near-certain signature of unobserved confounding. We perform confounder detection to elicit confounder candidates for MIMIC-III, then re-estimate the ETT.

We first fit a BERT-based estimator $\hat\pi_{\mathrm{BERT}}$ that predicts $P(X=1 \mid Z, T)$ (mechanical ventilation) from the physiological covariates $Z$ together with the clinical notes $T$, using the training subset of the data.
We further fit a predictor $\hat\pi_{\mathrm{xgb}}(Z)$ for predicting $X$ just based on the covariates $Z$ using \texttt{xgboost} \citep{chen2016xgboost}, and such predicted probabilities are used for the $\Mpimatch$ strategy. 
On a held-out set, we select $B = 2{,}000$ treated-control pairs $(i, j)$ using $\Mpimatch$, ensuring that any unit $i$ or $j$ can appear up to three times. 
For each selected pair, we run ablation-based extraction $\eps_\mathrm{abl}$. We extract the UMLS concepts present in patient $i$'s notes but not in $j$'s (set denoted by $\mathcal{C}_i \setminus \mathcal{C}_j$), and for each concept $c \in \mathcal{C}_i \setminus \mathcal{C}_j$ we remove its mentions from patient $i$'s notes and recompute the propensity for treatment labeled $\hat\pi(z_i, t_i^{-c})$. We define the concept impact as $\Delta_c = \hat\pi_{\mathrm{BERT}}(z_i, t_i) - \hat\pi_{\mathrm{BERT}}(z_i, t_i^{-c})$. Concepts with $\Delta_c > 1\%$ are recorded as candidate confounders.

Aggregating across the $2{,}000$ pairs, we report the 20 most frequently discovered concepts in \cref{tab:confounders}. 
We classify the concepts into groups, based on their clinical interpretation.
Group \textbf{(A)} consists of concepts with direct indication for mechanical ventilation, related to pulmonary injury or impairment. Group \textbf{(B)} consists of concepts (fever, tachycardia, communicable disease) related to Systemic Inflammatory Response Syndrome (SIRS \citep{bone1992definitions}), which is a known pathway that drives ventilation via septic respiratory failure \citep{esteban2000mechanical}. 
Group \textbf{(C)} is related to hemorrhage and trauma, which drive MV via shock and airway protection. 
Group \textbf{(D)} contains cardiac comorbidities, which act indirectly via hemodynamic instability or background risk \citep{esteban2000mechanical}. 
Finally, Group \textbf{(E)} contains non-specific correlates of illness severity, which are less likely to be recognized as drivers of ventilation decisions. 
In summary, Group (A) has very high clinical plausibility in terms of confounding MV, while Groups (B)-(D) have high plausibility. Therefore, 17/20 detected confounders are clinically highly plausible.

\paragraph{Effect on ETT estimation.} 
We next assess whether adjusting for the detected confounders impacts the ETT estimate comparing to using $Z$ only. 
We compute the ETT on the entire MIMIC-III data using causal forests \citep{athey2019estimating}  with four adjustment sets: $Z$, $(Z, U_A)$, $(Z, U_{A,B})$, and $(Z, U_{A,B,C,D})$, with $U_G$ encoding binary indicators for the concepts in groups $G$. 
\Cref{fig:ett-real} shows the effect estimates for different adjustment sets. 
The figure indicates that incorporating $U_A, U_B$ reduces the ETT slightly, while incorporating $U_{A,B,C,D}$ pulls the estimate back to a larger value. 
However, none of the effect estimates differ to a statistically significant level. Therefore, while most of the elicited confounders are clinically plausible, accounting for their impact does not reduce the ETT towards 0. 
We hypothesize that this is due to the retrospective and partial nature of the text data: even highly relevant concepts such as pleural effusion are documented in the notes for only a minority of patients, so the binary indicators we construct may be noisy proxies rather than the underlying confounder itself.

\begin{figure}[t]
\centering
\begin{minipage}[t]{0.54\linewidth}
\centering
\small
\setlength{\tabcolsep}{4pt}
\scalebox{0.9}{
\begin{tabular}{@{}rlc@{\hspace{1.2em}}rlc@{}}
\toprule
CUI C0\# & Concept & G & CUI C0\# & Concept & G \\
\midrule
32227 & Pleural effusion     & A & 10054 & Coronary athero.    & D \\
19080 & Hemorrhage           & C & 04144 & Atelectasis         & A \\
32285 & Pneumonia            & A & 16658 & Fracture            & C \\
13604 & Edema                & E & 09450 & Commun. disease    & B \\
13404 & Dyspnea              & A & 332448 & Infiltration        & A \\
30193 & Pain                 & E & 27497 & Nausea              & E \\
151699 & Intracranial hem.   & C & 242184 & Hypoxia             & A \\
20538 & Hypertensive dis.    & D & 31039 & Pericardial eff.    & D \\
15967 & Fever                & B & 39239 & Sinus tachycardia   & B \\
34063 & Pulm. edema      & A & 32326 & Pneumothorax        & A \\
\bottomrule
\end{tabular}
}
\captionof{table}{Top 20 discovered confounders (ranked by frequency) on real data via $\eps_\mathrm{abl}$ with $B = 2{,}000$ pairs via $\Mpimatch$. Groups: (\textbf{A}) pulmonary impairment/injury; (\textbf{B}) infection/SIRS; (\textbf{C}) hemorrhage/trauma; (\textbf{D}) cardiac; (\textbf{E}) non-specific.}
\label{tab:confounders}
\end{minipage}%
\hfill
\begin{minipage}[t]{0.4\linewidth}
\centering
\vspace{-55pt}
\includegraphics[width=1.03\linewidth]{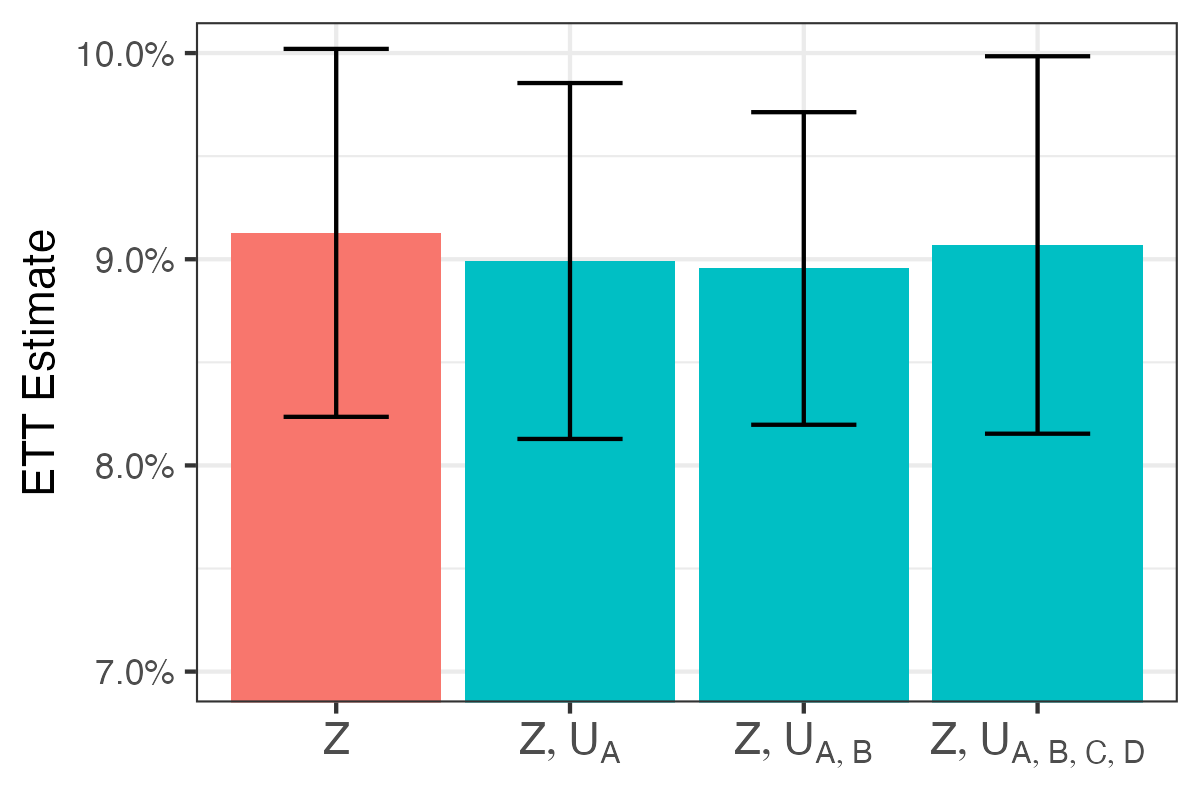}
\vspace{-13pt}
\captionof{figure}{ETT estimates on MIMIC-III with adjustment sets $Z$, $\{Z, U_A\}$, $\{Z, U_{A,B}\}$, and $\{Z, U_{A,B,C,D}\}$, using causal forests with $100$ bootstrap replicates ($95\%$ CIs reported).}
\label{fig:ett-real}
\end{minipage}
\end{figure}

\paragraph{Limitations and future work.} Several aspects of our framework merit further development. 
First, the multivariate stochastic-dominance notion underlying \cref{thm:stoch-dom} is strong; relaxing this notion to coordinate-wise marginal stochastic dominance is a natural next step that would likely allow relaxing the required assumptions (such as log-supermodularity of $P(u\mid z)$). 
Second, our analysis of extraction (\cref{sec:extraction}) rests on a specific model of expert behavior with uniform selection (\cref{eq:acc-perf}); this is a first formalization, and we expect real-world elicitation may work differently. 
Third, the framework presupposes that the decision-maker can articulate the factors driving their decisions when comparing two units; in settings where treatment is allocated on tacit or subconscious cues, the elicited explanations may underrepresent the true confounders. 
Finally, our framework detects \emph{candidates} for unobserved confounders with a probabilistic guarantee, but does not certify whether any specific detected variable is a confounder of $X \to Y$, nor does it establish that adding the detected variables recovers back-door validity. We leave these interesting directions for future work.


\newpage
\ifnum\treportflag=0
\section*{Broader Impact Statement}
This paper proposes an observational study design for eliciting candidate unobserved confounders from domain experts. The intended impact is methodological: providing causal analysts with a structured alternative to either assuming away unobserved confounding or commissioning new experimental data, which is often infeasible.

Used carefully, our framework could help analysts identify previously unmeasured covariates that ought to be collected in future iterations of a study, improving the validity of subsequent analyses. We therefore do not identify specific risks posed by the work, but view it as possibly strengthening the methodology in observational causal inference.
\fi

\bibliographystyle{abbrvnat}
\bibliography{refs}


\newpage
\appendix
\section*{\centering\Large Supplementary Material:\\ \textit{\ttl{}}}
The source code for reproducing the results can be found in the anonymized code repository \url{https://anonymous.4open.science/r/icu-deconfounding-5DB9}. The repository includes a README file explaining the setup steps. Most experiments were run on a MacBook Pro M3 (Tahoe 26.2) with under 24 hours of total computation.

\section{Proofs} \label{app:proofs}
In this appendix, we provide formal proofs of the results appearing in the main text.

\begin{proof}
Fix $z$ and let $\pi(z, u) = P(X = 1 \mid Z = z, U = u)$. By Bayes' rule,
\begin{align}
    P(u \mid Z = z, X = 1) &\propto \pi(z, u) \, P(u \mid Z = z), \\
    P(u \mid Z = z, X = 0) &\propto (1 - \pi(z, u)) \, P(u \mid Z = z).
\end{align}
The likelihood ratio is therefore
\begin{align}
    r(u) \coloneqq \frac{P(u \mid Z = z, X = 1)}{P(u \mid Z = z, X = 0)} \propto \frac{\pi(z, u)}{1 - \pi(z, u)},
\end{align}
which is non-decreasing in $u$ since $\pi(z, u)$ is. Hence the family $\{P(\cdot \mid Z = z, X = x)\}_{x \in \{0, 1\}}$ has monotone likelihood ratio in $u$, which implies the stochastic ordering in \cref{eq:uni-z-ord}.
\end{proof}

We first state a key lemma of Karlin and Rinott \citep{karlin1980classes} required for the proof of \cref{prop:z-match-multi}:
\begin{lemma}[Karlin-Rinott, 1980 \citep{karlin1980classes}]\label{lem:kr}
Let $f_0, f_1$ be densities on $\R^d$ satisfying
\begin{align}
    f_0(u)\, f_1(u') \;\le\; f_0(u \wedge u')\, f_1(u \vee u') \quad \text{for all } u, u' \in \R^d,
\end{align}
Then $f_1 \succeq_{st} f_0$.
\end{lemma}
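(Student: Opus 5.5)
The plan is to prove \cref{lem:kr} by induction on the dimension $d$, using a coupling argument. Throughout, I read $f_1 \succeq_{st} f_0$ as meaning there exist $U_0 \sim f_0$ and $U_1 \sim f_1$ on a common space with $U_0 \le U_1$ coordinatewise. This implies $\ex[\phi(U_1)] \ge \ex[\phi(U_0)]$ for every coordinatewise non-decreasing $\phi$, which is the form used in \cref{thm:stoch-dom}.

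\textbf{Base case $d=1$.} Take $u' < u$. Then $u \wedge u' = u'$ and $u \vee u' = u$, so the hypothesis reads $f_0(u) f_1(u') \le f_0(u') f_1(u)$. Hence $f_1/f_0$ is non-decreasing, i.e.\ the pair has monotone likelihood ratio. The standard MLR argument (as in the proof of \cref{prop:z-match-mlr}) gives $F_1 \le F_0$ pointwise for the CDFs. The quantile coupling $U_x = F_x^{-1}(V)$ with $V \sim \mathrm{Unif}(0,1)$ then satisfies $U_0 \le U_1$.

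\textbf{Inductive step.} Write $u = (v,t)$ with $v \in \R^{d-1}$ and $t \in \R$ the last coordinate. Let $g_x(t) = \int f_x(v,t)\,dv$ be the marginals of $t$ and $f_x(v \mid t) = f_x(v,t)/g_x(t)$ the conditionals. Two facts are needed.

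\begin{enumerate}
\item \emph{Conditionals.} For $t \le t'$, apply the hypothesis to the points $(v,t)$ and $(v',t')$. Since $t \wedge t' = t$ and $t \vee t' = t'$, this gives $f_0(v \mid t) f_1(v' \mid t') \le f_0(v \wedge v' \mid t) f_1(v \vee v' \mid t')$. The normalising constants $g_0(t)g_1(t')$ appear on both sides and cancel. By the induction hypothesis, $f_1(\cdot \mid t') \succeq_{st} f_0(\cdot \mid t)$ whenever $t \le t'$.

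\item \emph{Marginals.} Show that $g_0(t') g_1(t) \le g_0(t) g_1(t')$ for $t \le t'$, i.e.\ the marginals have MLR. Then the base case gives a coupling $T_0 \le T_1$.
\end{enumerate}

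\textbf{Assembling the coupling.} Draw $(T_0,T_1)$ from the one-dimensional quantile coupling. Conditionally on $(T_0, T_1)$, with $T_0 \le T_1$, draw $(V_0,V_1)$ from a monotone coupling of $f_0(\cdot \mid T_0)$ and $f_1(\cdot \mid T_1)$. Such a coupling exists by step 1 and Strassen's theorem, and can be chosen measurably in $(T_0,T_1)$. Then $U_0 = (V_0,T_0) \le (V_1,T_1) = U_1$ coordinatewise, and each $U_x$ has density $g_x(t) f_x(v \mid t) = f_x$.

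\textbf{Main obstacle: step 2.} Fix $t \le t'$ and apply the hypothesis to the points $(v,t')$ and $(v',t)$. The meet has last coordinate $t$ and the join has last coordinate $t'$, so
\begin{align}
f_0(v,t')\, f_1(v',t) \le f_0(v\wedge v',t)\, f_1(v \vee v',t').
\end{align}
The right-hand side is not a product in $(v,v')$, so I cannot simply integrate both sides. I would set
\begin{align}
\alpha = f_0(\cdot,t'), \quad \beta = f_1(\cdot,t), \quad \gamma = f_0(\cdot,t), \quad \delta = f_1(\cdot,t').
\end{align}
These satisfy $\alpha(v)\beta(v') \le \gamma(v \wedge v')\,\delta(v \vee v')$. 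I would then invoke the continuous Ahlswede--Daykin four-functions theorem (Preston's version) on $\R^{d-1}$, which yields $\int\alpha \int\beta \le \int\gamma\int\delta$. This is exactly $g_0(t')g_1(t) \le g_0(t)g_1(t')$.

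If I need to prove the four-functions inequality rather than cite it, I would do so by the same induction on dimension. In one dimension it follows by splitting the double integral over $v<v'$ and $v \ge v'$ and comparing pointwise. The higher-dimensional case follows by integrating out one coordinate at a time: the one-dimensional case, applied for fixed values of the remaining coordinates, shows that the partially integrated functions again satisfy the four-function condition in one fewer dimension.

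Remaining measure-theoretic care: zero sets of $g_x$, and measurable selection of the conditional couplings.
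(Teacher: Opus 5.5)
The paper gives no proof of this lemma; it cites Karlin--Rinott. Your argument is correct, given the four-functions theorem, Strassen's theorem, and the measure-theoretic points you flag. However, the tool you use in step~2 already proves the lemma in one line, which is how the inequality is usually derived. For $\phi \ge 0$ coordinatewise non-decreasing, put $\alpha = \phi f_0$, $\beta = f_1$, $\gamma = f_0$, $\delta = \phi f_1$. The hypothesis together with $\phi(u) \le \phi(u \vee u')$ gives $\alpha(u)\beta(u') \le \gamma(u \wedge u')\,\delta(u \vee u')$. The four-functions theorem then yields $\int \phi f_0 \le \int \phi f_1$, and general integrable non-decreasing $\phi$ follows by truncation and adding constants.

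This bypasses the induction, the conditional couplings, Strassen's theorem and the measurable-selection issue. It also delivers exactly the expectation form used in \cref{thm:stoch-dom}. What your route buys is an explicit almost-surely ordered coupling, plus the structural facts behind it: the hypothesis is inherited by last-coordinate marginals (via the four-functions theorem) and by conditionals at ordered levels $t \le t'$. In effect, you show that the multivariate likelihood-ratio order is closed under marginalization and conditioning. If you keep this route, you can remove the measurability concern by unrolling the induction into a sequential conditional-quantile (Rosenblatt-type) construction driven by shared independent uniforms, which is jointly measurable by construction.

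One imprecision is in your fallback proof of the four-functions theorem in one dimension. You cannot compare integrands on $\{v < v'\}$ and $\{v \ge v'\}$ separately. On both regions the hypothesis bounds $\alpha\beta$ by $\gamma$ at the smaller point times $\delta$ at the larger point, so summing gives $2\iint_{x<y}\gamma(x)\delta(y)$ rather than $\int\gamma\int\delta$. Instead, pair $(x,y)$ with $(y,x)$ for $x<y$. Set $a = \alpha(x)\beta(y)$, $b = \alpha(y)\beta(x)$, $c = \gamma(x)\delta(y)$, $d = \gamma(y)\delta(x)$. The hypothesis gives $a \le c$ and $b \le c$, and its diagonal instances give $ab \le cd$. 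Then $(c-a)(c-b) \ge 0$ yields $a + b \le c + d$, which integrates to the claim. Your reduction from $d-1$ dimensions to one, by integrating out a coordinate at a time, is correct as stated.
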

\begin{proof}[Proof of \cref{prop:z-match-multi}]
Fix $z$ and write $\pi(u) = P(X = 1 \mid Z = z, U = u)$, $g(u) = P(u \mid Z = z)$. By Bayes' rule,
\begin{align}
    P(u \mid Z = z, X = 1) \propto \pi(u)\, g(u), \qquad P(u \mid Z = z, X = 0) \propto (1 - \pi(u))\, g(u).
\end{align}
We verify the Karlin-Rinott condition of \cref{lem:kr}. By assumption (i), $\pi(u \vee u') \geq \pi(u')$ and $1 - \pi(u \wedge u') \geq 1 - \pi(u)$, so
\begin{align}
    (1 - \pi(u))\, \pi(u') \;\le\; (1 - \pi(u \wedge u'))\, \pi(u \vee u'). \label{eq:pi-ineq}
\end{align}
Combining \cref{eq:pi-ineq} with assumption (ii),
\begin{align}
    (1 - \pi(u))\, g(u) \cdot \pi(u')\, g(u') \;\le\; (1 - \pi(u \wedge u'))\, g(u \wedge u') \cdot \pi(u \vee u')\, g(u \vee u'),
\end{align}
which is exactly the condition of \cref{lem:kr} applied to $f_0(u) \propto (1 - \pi(u)) g(u)$ and $f_1(u) \propto \pi(u) g(u)$, and the conclusion follows.
\end{proof}

\begin{proof}[Proof of \cref{prop:pi-match-multi}]
By the covariate balancing property of the propensity score, $X \ci Z \mid \pi(Z)$, which implies $P(Z \mid \pi(Z) = p, X = x) = P(Z \mid \pi(Z) = p)$ for $x \in \{0, 1\}$. Expanding over $Z$ gives:
\begin{align}
    P(U \mid \pi(Z) = p, X = x) = \int P(U \mid Z = z, X = x)\, P(z \mid \pi(Z) = p)\, dz.
\end{align}
Both treatment groups are mixtures with respect to the same mixing distribution $P(z \mid \pi(Z) = p)$. By \cref{prop:z-match-multi}, $P(U \mid Z = z, X = 1) \succeq_{st} P(U \mid Z = z, X = 0)$ for every $z$. Stochastic dominance is preserved under mixing with a common mixing measure, since for every coordinatewise non-decreasing $\phi$,
\begin{align}
    \ex[\phi(U) \mid \pi(Z) = p, X = 1] &= \int \ex[\phi(U) \mid Z = z, X = 1]\, P(z \mid \pi(Z) = p)\, dz \\
    &\geq \int \ex[\phi(U) \mid Z = z, X = 0]\, P(z \mid \pi(Z) = p)\, dz \\
    &= \ex[\phi(U) \mid \pi(Z) = p, X = 0],
\end{align}
which implies \cref{eq:pi-match-dom}.
\end{proof}

\begin{proof}[Proof of \cref{prop:z-dom}]
The target claim is decomposed into two-step chains:
\begin{align}
    P(U \mid z', X{=}1) \;\succeq_{st}\; P(U \mid z', X{=}0) \;\succeq_{st}\; P(U \mid z, X{=}0), \label{eq:x0-chain} \\
    P(U \mid z', X{=}1) \;\succeq_{st}\; P(U \mid z, X{=}1) \;\succeq_{st}\; P(U \mid z, X{=}0). \label{eq:x1-chain}
\end{align}
In each chain, one link follows from \cref{prop:z-match-multi} under (i) and (ii) (applied at $Z = z'$ in \cref{eq:x0-chain}, at $Z = z$ in \cref{eq:x1-chain}). The other link is the $Z$-shift \cref{eq:z-shift-x}, which we now establish via assumption. We treat the $x = 0$ case; $x = 1$ is symmetric with $\pi$ in place of $1 - \pi$.

Set $f_0(u) \coloneqq P(u \mid z, X = 0)$ and $f_1(u) \coloneqq P(u \mid z', X = 0)$. Both are proportional to $h^{(0)}(\cdot, u)$ at the corresponding $z$ argument, with $u$-independent normalizers. We want to show that, for all $u, u' \in \R^d$,
\begin{align}
    f_0(u)\, f_1(u') \;\leq\; f_0(u \wedge u')\, f_1(u \vee u'); \label{eq:kr-cond}
\end{align}
equivalently, after canceling normalizers,
\begin{align}
    h^{(0)}(z, u)\, h^{(0)}(z', u') \;\leq\; h^{(0)}(z, u \wedge u')\, h^{(0)}(z', u \vee u'). \label{eq:kr-h}
\end{align}
For $u \leq u'$ or $u \geq u'$ coordinatewise, \cref{eq:kr-h} is trivial. For non-comparable $u, u'$, the derivative conditions in \cref{eq:across-zu} at $X=0$ imply, by integration along an axis-aligned path from $(z', u \vee u', z, u \wedge u')$ to $(z, u, z', u')$, the four-point inequality
\begin{align}
    \log h^{(0)}(z, u) + \log h^{(0)}(z', u') \;\leq\; \log h^{(0)}(z, u \wedge u') + \log h^{(0)}(z', u \vee u'),
\end{align}
which is exactly \cref{eq:kr-h}. Applying \cref{lem:kr} to \cref{eq:kr-cond} yields $f_1 \succeq_{st} f_0$, which is the $X = 0$ link of \cref{eq:x0-chain}, completing the chain.
\end{proof}

\begin{proof}[Proof of \cref{thm:zdom-beats-zmatch}]
Note that for both $\Mzmatch$ and $\Mzdom$, we have $|\{Z^{(k)} : Z^{(k)}_i > Z^{(k)}_j\}| =0$, i.e., $D(i, j) = 0$.
Therefore, $A = \mathbf{1}\{N \geq 1\}$ for both strategies, and the comparison reduces to $P(N \geq 1)$ (a single unobserved confounder greater for $i$ than $j$ is sufficient for success).
Conditional on $(Z_i, Z_j, X_i, X_j)$, the variables $U_i$ and $U_j$ are independent with $U_i \sim P(\cdot \mid Z_i, X_i = 1)$ and $U_j \sim P(\cdot \mid Z_j, X_j = 0)$. Define
\begin{align}
    \psi(u; U_j) \coloneqq \mathbf{1}\{\exists\, l : u^{(l)} > U_j^{(l)}\},
\end{align}
which is a coordinatewise non-decreasing function in $u$ for every fixed value of $U_j$. Hence $u \mapsto \ex_{U_j}[\psi(u; U_j)]$ is also coordinatewise non-decreasing. 
For the untreated unit, both strategies condition on $Z_j = z, X_j = 0$ and thus $P(U_j \mid Z_j = z, X_j=0)$ is the same between the strategies. 
However, the distribution of $U_i$ changes from $P(\cdot \mid z, X = 1)$ under $\Mzmatch$ to $P(\cdot \mid z', X = 1)$ under $\Mzdom$. By the $X = 1$ chain link (\cref{eq:x1-chain}) of \cref{prop:z-dom}, which uses condition \ref{cond:z-dom-cross} at $X = 1$,
\begin{align}
    P(U \mid Z = z', X = 1) \;\succeq_{st}\; P(U \mid Z = z, X = 1).
\end{align}
Multivariate stochastic dominance applied to the coordinatewise non-decreasing function $u \mapsto \ex_{U_j}[\psi(u; U_j)]$, together with noting that $\ex[\psi \mid C] = P(N\geq 1 \mid C)$ conditional on any $C$, gives the claim in \cref{eq:z-dom-above-match}.
\end{proof}

\begin{proof}[Proof of \cref{thm:pi-beats-rand}]
Decompose into $N$ and $D$ contributions:
\begin{align}
    \mathcal{U}_\alpha(\Mpimatch) - 
    \mathcal{U}_\alpha(\Mmarg) 
    \;&=\; 
    \underbrace{\big(\ex[N \mid \Mpimatch] 
        - \ex[N \mid \Mmarg]\big)}_{=:\, \Delta_N} \\
    \;&\quad -\; \alpha \underbrace{\big(\ex[D \mid \Mpimatch] 
        - \ex[D \mid \Mmarg]\big)}_{=:\, \Delta_D}.
\end{align}
By linearity, $\ex[N \mid \M] = \sum_l P(U_i^{(l)} > U_j^{(l)} \mid \M)$ 
and $\ex[D \mid \M] = \sum_k P(Z_i^{(k)} > Z_j^{(k)} \mid \M)$.

\medskip
\noindent\emph{Marginal matching.} Under $\Mmarg$, 
$(Z_i, U_i) \sim P(\cdot \mid X = 1)$ and $(Z_j, U_j) \sim P(\cdot \mid X = 0)$ 
are independent. The per-coordinate probabilities therefore equal the 
corresponding marginal selection AUCs by definition:
\begin{align}
    \ex[D \mid \Mmarg] 
        = \sum_{k=1}^{d_Z} \aucsel(Z^{(k)}), 
    \qquad
    \ex[N \mid \Mmarg] = \sum_{l=1}^{d_U} \aucsel(U^{(l)}).
    \label{eq:rand-vals}
\end{align}

\noindent\emph{$\pi$-match, $D$-side.} Conditional on 
$\pi(Z_i) = \pi(Z_j) = p$, the covariate-balancing property 
$X \ci Z \mid \pi(Z)$ implies $Z_i$ and $Z_j$ are i.i.d.\ draws from 
$P(Z \mid \pi(Z) = p)$. Continuity of the marginal of $Z^{(k)}$ and 
exchangeability give
\begin{align}
    P(Z_i^{(k)} > Z_j^{(k)} \mid \pi = p) \;=\; \tfrac{1}{2}
\end{align}
for every $k$ and $p$. Summing over $k$ and averaging over the pair 
distribution of $\pi$ yields 
$\ex[D \mid \Mpimatch] = d_Z / 2$. Hence
\begin{align}
    \Delta_D \;=\; \tfrac{d_Z}{2} - \sum_k \aucsel(Z^{(k)}) 
    \;=\; -\sum_k \big(\aucsel(Z^{(k)}) - \tfrac{1}{2}\big).
    \label{eq:Delta_D}
\end{align}

\noindent\emph{$\pi$-match, $N$-side.} Conditional on $\pi = p$, 
\cref{prop:pi-match-multi} gives the multivariate ordering 
$P(U \mid \pi = p, X = 1) \succeq_{st} P(U \mid \pi = p, X = 0)$, 
which implies the marginal ordering on each coordinate:
\begin{align}
    P(U^{(l)} \mid \pi = p, X = 1) 
    \;\succeq_{st}\; 
    P(U^{(l)} \mid \pi = p, X = 0).
    \label{eq:marg-dom}
\end{align}
For two independent univariate continuous random variables $A, B$ 
with $A \succeq_{st} B$ and CDFs $F_A \leq F_B$,
\begin{align}
    P(A > B) \;=\; \ex[1 - F_A(B)] \;\geq\; \ex[1 - F_B(B)] 
    \;=\; \tfrac{1}{2},
\end{align}
where the last equality uses the probability integral transform 
$F_B(B) \sim \mathrm{Unif}[0, 1]$. Applied to \cref{eq:marg-dom} 
conditionally on $\pi = p$, this yields 
$P(U_i^{(l)} > U_j^{(l)} \mid \pi = p) \geq 1/2$. Summing over $l$ 
and averaging over $p$,
\begin{align}
    \ex[N \mid \Mpimatch] \;\geq\; d_U / 2,
\end{align}
and therefore
\begin{align}
    \Delta_N \;\geq\; \tfrac{d_U}{2} - \sum_l \aucsel(U^{(l)}) 
    \;=\; -\sum_l \big(\aucsel(U^{(l)}) - \tfrac{1}{2}\big).
    \label{eq:Delta_N}
\end{align}

\noindent\emph{Combining.} Substituting \cref{eq:Delta_D,eq:Delta_N},
\begin{align}
    \mathcal{U}_\alpha(\Mpimatch) - 
    \mathcal{U}_\alpha(\Mmarg) 
    \;&=\; \Delta_N - \alpha \Delta_D \\
    \;&\geq\;
    -\sum_l \big(\aucsel(U^{(l)}) - \tfrac{1}{2}\big) 
    + \alpha \sum_k \big(\aucsel(Z^{(k)}) - \tfrac{1}{2}\big),
\end{align}
which gives \cref{eq:pi-beats-rand}.
\end{proof}
\newpage

\section{Examples Supporting the Theory} \label{appendix:examples}
\ifexamples
The following example illustrates log-supermodularity in the case of a Gaussian distribution:
\begin{example}[Log-supermodularity for Gaussians] \label{ex:gauss-logsupermod}
Suppose $U \mid Z = z \sim \mathcal{N}(\mu(z), \Sigma)$, where the mean $\mu(z)$ may depend on $z$ but the covariance $\Sigma$ does not. The log-density is
\begin{align}
    \log P(u \mid z) = -\tfrac{1}{2}(u - \mu(z))^\top \Sigma^{-1} (u - \mu(z)) + \mathrm{const},
\end{align}
and a direct computation gives, for $k \neq l$,
\begin{align}
    \frac{\partial^2 \log P(u \mid z)}{\partial u^{(k)} \partial u^{(l)}} = -(\Sigma^{-1})_{kl}.
\end{align}
Log-supermodularity therefore reduces to the condition
\begin{align}
    (\Sigma^{-1})_{kl} \;\leq\; 0 \quad \text{for all } k \neq l,
\end{align}
i.e., the off-diagonal entries of the precision matrix are non-positive, which is equivalent to all partial correlations being non-negative.
\end{example}
The next example illustrates why log-supermodularity is necessary in \cref{prop:z-match-multi}:
\begin{example}[Necessity of supermodularity]\label{ex:necessity-supermod}
Fix $Z=z$ and condition on it throughout. Consider $U = (U^{(1)}, U^{(2)}) \in \{0, 1\}^2$, and the joint distribution
\begin{align}
    P(U) = \begin{cases} 0.49 & \text{if } u \in \{(1, 0), (0, 1)\} \\ 0.01 & \text{if } u \in \{(0, 0), (1, 1)\}, \end{cases}
\end{align}
which is log-submodular, since $P(0,0)\, P(1,1) = 10^{-4} \ll 0.2401 = P(0, 1)\, P(1, 0)$. Let the propensity depend only on $U^{(1)}$,
\begin{align}
    \pi(u^{(1)}, u^{(2)}) = \begin{cases} 0.9 & \text{if } u^{(1)} = 1 \\ 0.1 & \text{if } u^{(1)} = 0, \end{cases}
\end{align}
which is non-decreasing in each coordinate of $u$. A direct computation gives
\begin{align}
    P(U \in A \mid X = 1) \approx 0.12, \qquad P(U \in A \mid X = 0) \approx 0.88,
\end{align}
for the upper set $A = \{u : u^{(2)} = 1\}$. Hence $P(U \mid X = 1) \not\succeq_{st} P(U \mid X = 0)$, and the dominance is in fact reversed along the $U^{(2)}$ coordinate.
\end{example}
The intuition for the example can be described as follows. Conditioning on $X = 1$ pulls the distribution of $U^{(1)}$ upward, since $\pi$ is increasing in $u^{(1)}$. The strong negative dependence between $U^{(1)}$ and $U^{(2)}$ then drags the distribution of $U^{(2)}$ downward, against the direction of the dominance claim. Log-supermodularity of $P(U \mid Z = z)$ rules out exactly this kind of trade-off, ensuring that the $\{X=0\to X=1\}$-induced shift along one coordinate does not get reversed along another.
\fi
In the context of our ICU example, this means that the assumption would fail if unobserved severity markers traded off against one another -- e.g., if patients were sick along their respiratory axis \emph{or} their cardiac axis, but rarely both. As argued above, this is unlikely to be true for our ICU setting, in which severity tends to cluster across organ systems.

For grounding $Z$-dominance of \cref{prop:z-dom}, we discuss the logistic-Gaussian setting:
\begin{example}[$Z$-dominance in the logistic-Gaussian setting]\label{ex:z-dom-log-gauss}
Let $U, Z \in \R$ be jointly Gaussian with unit variances and correlation $\rho$, and let
\begin{align}
    P(X = 1 \mid Z, U) = \sigma(\alpha + \beta Z + \gamma U), \qquad \sigma(t) := (1 + e^{-t})^{-1},
\end{align}
with $\beta, \gamma > 0$. Since $U$ is univariate, the within-$u$ inequality in \cref{eq:across-zu} is vacuous and only the across-$z$-$u$ inequality \cref{eq:across-zu} must be checked. We work the $x = 0$ side. The cross derivative decomposes as
\begin{align}
    \frac{\partial^2 \log h^{(0)}(z, u)}{\partial u\, \partial z} = \underbrace{\frac{\partial^2 \log(1-\sigma)}{\partial u\, \partial z}}_{\text{collider}} + \underbrace{\frac{\partial^2 \log P(u \mid z)}{\partial u\, \partial z}}_{\text{bidirected}}.
\end{align}
A direct computation gives
\begin{align}
    \frac{\partial^2 \log(1 - \sigma(\alpha + \beta z + \gamma u))}{\partial u\, \partial z} = -\beta\gamma \cdot \sigma(\alpha + \beta z + \gamma u)(1 - \sigma(\alpha + \beta z + \gamma u)),
\end{align}
and, using $\log P(u \mid z) = -\tfrac{1}{2(1-\rho^2)}(u - \rho z)^2 + \mathrm{const}$,
\begin{align}
    \frac{\partial^2 \log P(u \mid z)}{\partial u\, \partial z} = \frac{\rho}{1 - \rho^2}.
\end{align}
The condition \cref{eq:across-zu} therefore becomes
\begin{align}
    \frac{\rho}{1 - \rho^2} \;\leq\; \beta\gamma \cdot \sigma(\alpha + \beta z + \gamma u)(1 - \sigma(\alpha + \beta z + \gamma u)). \label{eq:z-dom-log-gauss}
\end{align}
We note that the same inequality is obtained on the $x = 1$ side: in the logistic case $\partial_z \partial_u \log \sigma = \partial_z \partial_u \log(1 - \sigma) = -\beta\gamma\, \sigma(1-\sigma)$, so both chains in \cref{prop:z-dom} succeed under the same condition.

Two regions of $(u, z)$-space are illustrative. Near the decision boundary, where $\alpha + \beta z + \gamma u \approx 0$ and $\sigma(1-\sigma) = 1/4$, \cref{eq:z-dom-log-gauss} reduces to $\rho/(1 - \rho^2) \leq \beta\gamma/4$, permitting $\rho \lesssim 0.24$ when $\beta\gamma = 1$ and $\rho \lesssim 0.62$ when $\beta\gamma = 4$. Further into the tails, $\sigma(1-\sigma) \to 0$ and the condition tightens to $\rho \leq 0$.
\end{example}
The example shows that (iii) is genuinely weaker than $Z \ci U$: arbitrary positive marginal correlation is permitted, provided the propensity coefficients are sufficiently large. It also makes the trade-off explicit: increasing $\beta\gamma$ relaxes \cref{eq:z-dom-log-gauss}, while in a multivariate extension increasing $\gamma$ would make the within-$U$ condition harder to satisfy: the regime that satisfies both is one of strong propensity responsiveness paired with strong intra-$U$ marginal coupling.

The following examples illustrate how the chain \cref{eq:z-shift-x} of \cref{prop:z-dom} may fail:
\begin{example}[Interaction in $\pi$ at fixed $X$ may break one chain even under $Z \ci U$]\label{ex:zdom-fail-interaction}
Consider the SCM
\begin{align}
    Z &\sim \mathrm{Bern}(0.5), \quad U \sim \mathrm{Unif}[0, 1], \quad Z \ci U, \\
    X &\gets \mathrm{Bern}(\pi(Z, U)), \quad \pi(Z, U) = \begin{cases} U & \text{if } Z = 0 \\ U & \text{if } Z = 1, U < 0.5 \\ 1 & \text{if } Z = 1, U \geq 0.5. \end{cases}
\end{align}
A direct computation gives $\pi(z_0) = 0.5$, $\pi(z_1) = 0.625$, and
\begin{align}
    P(U > \tfrac{1}{2} \mid Z = z_1, X = 1) &= 0.80, & P(U > \tfrac{1}{2} \mid Z = z_0, X = 1) &= 0.75, \\
    P(U > \tfrac{1}{2} \mid Z = z_1, X = 0) &= 0, & P(U > \tfrac{1}{2} \mid Z = z_0, X = 0) &= 0.25.
\end{align}
Interestingly, the two-step $X$ and $Z$-dominance still holds 
$$P(U > \tfrac{1}{2} \mid z_0, X = 1) = 0.75 \geq 0 = P(U > \tfrac{1}{2} \mid z_1, X = 0),$$ but only via the $X = 0$ chain. The $X = 1$ chain fails ($0.75 \not\geq 0.80$ above), because the saturation of $\pi$ in the $(z_1, u \geq 0.5)$ corner makes (iii) fail at $X = 1$. The example illustrates that (iii) is genuinely a per-chain assumption: an interaction in the propensity at fixed $X$ can break one side without breaking the other, and \cref{prop:z-dom} only needs one $X=x$ to work.
\end{example}
Our further example illustrates why strong positive correlation between $Z$ and $U$ may break the $X$ and $Z$-dominance:
\begin{example}[Strong positive $Z$-$U$ correlation breaks both chains]\label{ex:zdom-fail-correlation}
Consider the SCM
\begin{align}
    Z &\sim \mathrm{Bern}(0.5), \\
    U \mid Z &\sim \begin{cases} \mathrm{Bern}(0.1) & \text{if } Z = 0 \\ \mathrm{Bern}(0.9) & \text{if } Z = 1 \end{cases} \\
    X &\gets \mathrm{Bern}(\pi(Z, U)), \quad \pi(Z, U) = \begin{cases} 0.2 & Z = 0, U = 0 \\ 0.4 & Z = 0, U = 1 \\ 0.3 & Z = 1, U = 0 \\ 0.5 & Z = 1, U = 1, \end{cases}
\end{align}
where $\pi$ is non-decreasing in each of $Z, U$. We compute $\pi(z_0) = 0.22$, $\pi(z_1) = 0.48$, and
\begin{align}
    P(U = 1 \mid Z = z_0, X = 1) \approx 0.18, \qquad P(U = 1 \mid Z = z_1, X = 0) \approx 0.87,
\end{align}
so $Z$-dominance fails: the smaller-$Z$ treated unit is stochastically smaller along $U$ than the larger-$Z$ untreated unit. The strong positive $Z$-$U$ marginal dependence concentrates joint mass on the $(0, 0)$ and $(1, 1)$ corners; (iii) fails for both $x = 0$ and $x = 1$, so neither chain is available. This is the second failure mode: if marginal $Z$-$U$ dependence is too strong relative to the responsiveness of $\pi$, the bidirected force overwhelms the collider force, and dominance doesn't hold.
\end{example}
\newpage
\section{\texorpdfstring{$\pi$}{Pi}-Dominance} \label{appendix:pi-dom}
In this appendix, we discuss the $\pi$-dominance strategy.
The same chain decomposition that produced \cref{prop:z-dom} applies with the propensity score $\pi(Z)$ replacing the full vector $Z$, giving a $\pi$-dominance result that compares two propensity levels $p' < p$ at fixed treatment.
\begin{proposition}[$\pi$-dominance]\label{prop:pi-dom}
Under assumptions of \cref{prop:z-match-multi}, and additionally:
\begin{enumerate}[label=(\roman*'), leftmargin=1.75em, nosep]
\setcounter{enumi}{2}
    \item For some $x \in \{0, 1\}$, defining $h^{(x)}_\pi(p, u) \coloneqq P(U = u, \pi(Z) = p, X = x)$, $\log h^{(x)}_\pi$ is twice differentiable and strictly positive, with
    \begin{align}
        \frac{\partial^2 \log h^{(x)}_\pi}{\partial u^{(j)}\, \partial u^{(k)}}(p, u) &\;\geq\; 0 \quad \text{for all } j \neq k, \quad
        \frac{\partial^2 \log h^{(x)}_\pi}{\partial p\, \partial u^{(j)}}(p, u) &\;\leq\; 0 \quad \text{for all } j.
    \end{align} \label{cond:pi-dom-cross} 
\end{enumerate}
Then, for $p' < p$,
\begin{align}
    P(U \mid \pi(Z) = p', X = 1) \;\succeq_{st}\; P(U \mid \pi(Z) = p, X = 0).
\end{align}
\end{proposition}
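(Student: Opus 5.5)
The plan is to mirror the proof of \cref{prop:z-dom}, replacing the vector $Z$ with the scalar balancing score $\pi(Z)$. The target ordering will be obtained as one of two two-step chains:
\begin{align}
    P(U \mid \pi{=}p', X{=}1) \;\succeq_{st}\; P(U \mid \pi{=}p', X{=}0) \;\succeq_{st}\; P(U \mid \pi{=}p, X{=}0), \\
    P(U \mid \pi{=}p', X{=}1) \;\succeq_{st}\; P(U \mid \pi{=}p, X{=}1) \;\succeq_{st}\; P(U \mid \pi{=}p, X{=}0).
\end{align}
The $X$-links (first link in the top chain, second in the bottom) are exactly \cref{prop:pi-match-multi}, applied at level $p'$ and at level $p$ respectively. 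This step only uses assumptions (i)--(ii) of \cref{prop:z-match-multi} together with covariate balancing. Since $\succeq_{st}$ is transitive (the defining inequality $\ex[\phi(A)]\ge\ex[\phi(B)]$ chains), it suffices to establish the remaining $\pi$-shift link for whichever $x$ satisfies (iii').

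For the $\pi$-shift at fixed $X=x$, set $f_0(u)\coloneqq P(u\mid \pi=p, X=x)$ and $f_1(u)\coloneqq P(u\mid \pi=p', X=x)$. Both are proportional in $u$ to $h^{(x)}_\pi(p,\cdot)$ and $h^{(x)}_\pi(p',\cdot)$, with normalizers that do not depend on $u$. The Karlin--Rinott condition of \cref{lem:kr} therefore reduces to the four-point inequality
\begin{align}
    \log h^{(x)}_\pi(p,u) + \log h^{(x)}_\pi(p',u') \;\le\; \log h^{(x)}_\pi(p,u\wedge u') + \log h^{(x)}_\pi(p',u\vee u'),
\end{align}
for all $u,u'$ and all $p'<p$. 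Note that this is joint supermodularity of $\log h^{(x)}_\pi$ in the variables $(-p,u)$. Flipping the sign of $p$ turns the hypothesis $\partial_p\partial_{u^{(j)}}\log h^{(x)}_\pi\le 0$ into a nonnegative cross partial, and the within-$u$ cross partials are nonnegative by (iii').

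To pass from the derivative conditions to the inequality, I will invoke the standard fact (Topkis) that a $C^2$ function on a product of intervals with all off-diagonal second partials nonnegative is supermodular. This gives $F(a)+F(b)\le F(a\wedge b)+F(a\vee b)$ on the lattice in $(-p,u)$. I then apply it to $a=(-p,u)$ and $b=(-p',u')$. Because $-p<-p'$, the meet is $(-p,u\wedge u')$ and the join is $(-p',u\vee u')$, which is exactly the required inequality. \Cref{lem:kr} then yields $f_1\succeq_{st}f_0$, i.e.\ $P(U\mid\pi{=}p',X{=}x)\succeq_{st}P(U\mid\pi{=}p,X{=}x)$. Finally, I compose this with the appropriate \cref{prop:pi-match-multi} link.

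The main obstacle is regularity. Supermodularity via cross partials requires the domain in $(p,u)$ to be a product of intervals (a sublattice) and $\log h^{(x)}_\pi$ to be finite there. Strict positivity in (iii') supplies finiteness. If the supports are not rectangular, I would restrict to the support and argue by integrating the mixed partials over the rectangle spanned by the points, just as the axis-aligned path argument does in the proof of \cref{prop:z-dom}. A secondary point is that $h^{(x)}_\pi$ is a density in $p$, so conditioning on the null event $\{\pi=p\}$ should be interpreted via the disintegration that \cref{prop:pi-match-multi} already uses.
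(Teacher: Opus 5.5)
Your proposal is correct and follows the paper's route. The paper also reruns the proof of \cref{prop:z-dom} with the scalar $p$ in place of $z$, takes the $X$-links from \cref{prop:pi-match-multi}, and obtains the $\pi$-shift link from \cref{lem:kr} via the four-point inequality; your appeal to Topkis's characterization of supermodularity in $(-p,u)$ is a cleaner packaging of the paper's axis-aligned path-integration step, and your meet/join bookkeeping with $-p<-p'$ is right.
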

The proof is identical to that of \cref{prop:z-dom}, with the scalar $p$ replacing $z$ throughout: \cref{prop:pi-match-multi} provides the $\pi$-matching link of the chain, and \ref{cond:pi-dom-cross} provides the $\pi$-shift link via Karlin--Rinott.
The collider/bidirected interpretation of \cref{prop:z-dom} carries over: the within-$U$ inequality again asks that intra-$U$ marginal clustering survives the collider re-weighting, while the across-$\pi$-$U$ inequality asks that the collider channel overturn the marginal dependence between $\pi(Z)$ and $U$. The conceptual gain over $Z$-dominance is that the across-direction inequality is now a single one-dimensional cross-partial, regardless of $\dim(Z)$: the propensity score concentrates the structural assumption along the single dimension that actually drives treatment assignment. The within-$U$ inequality, by contrast, is not strictly weaker than its $Z$-dominance counterpart, since $h^{(x)}_\pi$ is a marginalization of $h^{(x)}$ over the level set $\{z : \pi(z) = p\}$, and log-supermodularity is not preserved under marginalization in dimensions higher than two; the conditions \ref{cond:z-dom-cross} and \ref{cond:pi-dom-cross} of \cref{prop:z-dom,prop:pi-dom} are therefore similar in spirit but logically independent (one does not imply the other).

\ifnum\treportflag=0
\newpage
\section*{NeurIPS Paper Checklist}

\begin{enumerate}

\item {\bf Claims}
    \item[] Question: Do the main claims made in the abstract and introduction accurately reflect the paper's contributions and scope?
    \item[] Answer: \answerYes{}
    \item[] Justification: The contributions listed in the introduction are substantiated: the formalization of the Confounder Detection via Treatment Intent (CDTI) design (\cref{sec:methods}), theoretical results on stochastic dominance for matching strategies (\cref{thm:stoch-dom}) and strategy dominance (\cref{thm:zdom-beats-zmatch}, \cref{thm:pi-beats-rand}), and the empirical validation (\cref{sec:experiments}).
    \item[] Guidelines:
    \begin{itemize}
        \item The answer \answerNA{} means that the abstract and introduction do not include the claims made in the paper.
        \item The abstract and/or introduction should clearly state the claims made, including the contributions made in the paper and important assumptions and limitations. A \answerNo{} or \answerNA{} answer to this question will not be perceived well by the reviewers. 
        \item The claims made should match theoretical and experimental results, and reflect how much the results can be expected to generalize to other settings. 
        \item It is fine to include aspirational goals as motivation as long as it is clear that these goals are not attained by the paper. 
    \end{itemize}

\item {\bf Limitations}
    \item[] Question: Does the paper discuss the limitations of the work performed by the authors?
    \item[] Answer: \answerYes{}
    \item[] Justification: 
    Limitations discussed explicitly at the end of the paper.
    \item[] Guidelines:
    \begin{itemize}
        \item The answer \answerNA{} means that the paper has no limitation while the answer \answerNo{} means that the paper has limitations, but those are not discussed in the paper. 
        \item The authors are encouraged to create a separate ``Limitations'' section in their paper.
        \item The paper should point out any strong assumptions and how robust the results are to violations of these assumptions (e.g., independence assumptions, noiseless settings, model well-specification, asymptotic approximations only holding locally). The authors should reflect on how these assumptions might be violated in practice and what the implications would be.
        \item The authors should reflect on the scope of the claims made, e.g., if the approach was only tested on a few datasets or with a few runs. In general, empirical results often depend on implicit assumptions, which should be articulated.
        \item The authors should reflect on the factors that influence the performance of the approach. For example, a facial recognition algorithm may perform poorly when image resolution is low or images are taken in low lighting. Or a speech-to-text system might not be used reliably to provide closed captions for online lectures because it fails to handle technical jargon.
        \item The authors should discuss the computational efficiency of the proposed algorithms and how they scale with dataset size.
        \item If applicable, the authors should discuss possible limitations of their approach to address problems of privacy and fairness.
        \item While the authors might fear that complete honesty about limitations might be used by reviewers as grounds for rejection, a worse outcome might be that reviewers discover limitations that aren't acknowledged in the paper. The authors should use their best judgment and recognize that individual actions in favor of transparency play an important role in developing norms that preserve the integrity of the community. Reviewers will be specifically instructed to not penalize honesty concerning limitations.
    \end{itemize}

\item {\bf Theory assumptions and proofs}
    \item[] Question: For each theoretical result, does the paper provide the full set of assumptions and a complete (and correct) proof?
    \item[] Answer: \answerYes{}
    \item[] Justification: All assumptions are clearly stated alongside the theoretical results in the main text (\cref{sec:methods}). Complete proofs for all propositions and theorems (\cref{thm:stoch-dom}, \cref{thm:zdom-beats-zmatch}, \cref{thm:pi-beats-rand}) are provided in \cref{app:proofs}.
    \item[] Guidelines:
    \begin{itemize}
        \item The answer \answerNA{} means that the paper does not include theoretical results. 
        \item All the theorems, formulas, and proofs in the paper should be numbered and cross-referenced.
        \item All assumptions should be clearly stated or referenced in the statement of any theorems.
        \item The proofs can either appear in the main paper or the supplemental material, but if they appear in the supplemental material, the authors are encouraged to provide a short proof sketch to provide intuition. 
        \item Inversely, any informal proof provided in the core of the paper should be complemented by formal proofs provided in appendix or supplemental material.
        \item Theorems and Lemmas that the proof relies upon should be properly referenced. 
    \end{itemize}

    \item {\bf Experimental result reproducibility}
    \item[] Question: Does the paper fully disclose all the information needed to reproduce the main experimental results of the paper to the extent that it affects the main claims and/or conclusions of the paper (regardless of whether the code and data are provided or not)?
    \item[] Answer: \answerYes{}
    \item[] Justification: All data sources (MIMIC-III, AUMCdb, SICdb) and synthetic data generation processes are described in the main text (\cref{sec:experiments}). Full code, including data generation scripts and plotting composers, is provided in an anonymized repository.
    \item[] Guidelines:
    \begin{itemize}
        \item The answer \answerNA{} means that the paper does not include experiments.
        \item If the paper includes experiments, a \answerNo{} answer to this question will not be perceived well by the reviewers: Making the paper reproducible is important, regardless of whether the code and data are provided or not.
        \item If the contribution is a dataset and\slash or model, the authors should describe the steps taken to make their results reproducible or verifiable. 
        \item Depending on the contribution, reproducibility can be accomplished in various ways. For example, if the contribution is a novel architecture, describing the architecture fully might suffice, or if the contribution is a specific model and empirical evaluation, it may be necessary to either make it possible for others to replicate the model with the same dataset, or provide access to the model. In general. releasing code and data is often one good way to accomplish this, but reproducibility can also be provided via detailed instructions for how to replicate the results, access to a hosted model (e.g., in the case of a large language model), releasing of a model checkpoint, or other means that are appropriate to the research performed.
        \item While NeurIPS does not require releasing code, the conference does require all submissions to provide some reasonable avenue for reproducibility, which may depend on the nature of the contribution. For example
        \begin{enumerate}
            \item If the contribution is primarily a new algorithm, the paper should make it clear how to reproduce that algorithm.
            \item If the contribution is primarily a new model architecture, the paper should describe the architecture clearly and fully.
            \item If the contribution is a new model (e.g., a large language model), then there should either be a way to access this model for reproducing the results or a way to reproduce the model (e.g., with an open-source dataset or instructions for how to construct the dataset).
            \item We recognize that reproducibility may be tricky in some cases, in which case authors are welcome to describe the particular way they provide for reproducibility. In the case of closed-source models, it may be that access to the model is limited in some way (e.g., to registered users), but it should be possible for other researchers to have some path to reproducing or verifying the results.
        \end{enumerate}
    \end{itemize}

\item {\bf Open access to data and code}
    \item[] Question: Does the paper provide open access to the data and code, with sufficient instructions to faithfully reproduce the main experimental results, as described in supplemental material?
    \item[] Answer: \answerYes{}
    \item[] Justification: The code is publicly available in an anonymized repository. The ICU databases (MIMIC-III, AUMCdb, SICdb) are publicly available to credentialed researchers. The paper and code repository describe all preprocessing steps necessary to reproduce the analyses.
    \item[] Guidelines:
    \begin{itemize}
        \item The answer \answerNA{} means that paper does not include experiments requiring code.
        \item Please see the NeurIPS code and data submission guidelines (\url{https://neurips.cc/public/guides/CodeSubmissionPolicy}) for more details.
        \item While we encourage the release of code and data, we understand that this might not be possible, so \answerNo{} is an acceptable answer. Papers cannot be rejected simply for not including code, unless this is central to the contribution (e.g., for a new open-source benchmark).
        \item The instructions should contain the exact command and environment needed to run to reproduce the results. See the NeurIPS code and data submission guidelines (\url{https://neurips.cc/public/guides/CodeSubmissionPolicy}) for more details.
        \item The authors should provide instructions on data access and preparation, including how to access the raw data, preprocessed data, intermediate data, and generated data, etc.
        \item The authors should provide scripts to reproduce all experimental results for the new proposed method and baselines. If only a subset of experiments are reproducible, they should state which ones are omitted from the script and why.
        \item At submission time, to preserve anonymity, the authors should release anonymized versions (if applicable).
        \item Providing as much information as possible in supplemental material (appended to the paper) is recommended, but including URLs to data and code is permitted.
    \end{itemize}

\item {\bf Experimental setting/details}
    \item[] Question: Does the paper specify all the training and test details (e.g., data splits, hyperparameters, how they were chosen, type of optimizer) necessary to understand the results?
    \item[] Answer: \answerYes{}
    \item[] Justification: Experimental details are explicitly provided in \cref{sec:experiments}.
    \item[] Guidelines:
    \begin{itemize}
        \item The answer \answerNA{} means that the paper does not include experiments.
        \item The experimental setting should be presented in the core of the paper to a level of detail that is necessary to appreciate the results and make sense of them.
        \item The full details can be provided either with the code, in appendix, or as supplemental material.
    \end{itemize}

\item {\bf Experiment statistical significance}
    \item[] Question: Does the paper report error bars suitably and correctly defined or other appropriate information about the statistical significance of the experiments?
    \item[] Answer: \answerYes{}
    \item[] Justification: Where appropriate, confidence intervals are reported.
    \item[] Guidelines:
    \begin{itemize}
        \item The answer \answerNA{} means that the paper does not include experiments.
        \item The authors should answer \answerYes{} if the results are accompanied by error bars, confidence intervals, or statistical significance tests, at least for the experiments that support the main claims of the paper.
        \item The factors of variability that the error bars are capturing should be clearly stated (for example, train/test split, initialization, random drawing of some parameter, or overall run with given experimental conditions).
        \item The method for calculating the error bars should be explained (closed form formula, call to a library function, bootstrap, etc.)
        \item The assumptions made should be given (e.g., Normally distributed errors).
        \item It should be clear whether the error bar is the standard deviation or the standard error of the mean.
        \item It is OK to report 1-sigma error bars, but one should state it. The authors should preferably report a 2-sigma error bar than state that they have a 96\% CI, if the hypothesis of Normality of errors is not verified.
        \item For asymmetric distributions, the authors should be careful not to show in tables or figures symmetric error bars that would yield results that are out of range (e.g., negative error rates).
        \item If error bars are reported in tables or plots, the authors should explain in the text how they were calculated and reference the corresponding figures or tables in the text.
    \end{itemize}

\item {\bf Experiments compute resources}
    \item[] Question: For each experiment, does the paper provide sufficient information on the computer resources (type of compute workers, memory, time of execution) needed to reproduce the experiments?
    \item[] Answer: \answerYes{}
    \item[] Justification: Information regarding compute resources is provided in the supplementary material and code repository README. Most experiments were run on a MacBook Pro M3 (Tahoe 26.2) with under 24 hours of total computation.
    \item[] Guidelines:
    \begin{itemize}
        \item The answer \answerNA{} means that the paper does not include experiments.
        \item The paper should indicate the type of compute workers CPU or GPU, internal cluster, or cloud provider, including relevant memory and storage.
        \item The paper should provide the amount of compute required for each of the individual experimental runs as well as estimate the total compute. 
        \item The paper should disclose whether the full research project required more compute than the experiments reported in the paper (e.g., preliminary or failed experiments that didn't make it into the paper). 
    \end{itemize}
    
\item {\bf Code of ethics}
    \item[] Question: Does the research conducted in the paper conform, in every respect, with the NeurIPS Code of Ethics \url{https://neurips.cc/public/EthicsGuidelines}?
    \item[] Answer: \answerYes{}
    \item[] Justification: The research complies with the NeurIPS Code of Ethics.
    \item[] Guidelines:
    \begin{itemize}
        \item The answer \answerNA{} means that the authors have not reviewed the NeurIPS Code of Ethics.
        \item If the authors answer \answerNo, they should explain the special circumstances that require a deviation from the Code of Ethics.
        \item The authors should make sure to preserve anonymity (e.g., if there is a special consideration due to laws or regulations in their jurisdiction).
    \end{itemize}

\item {\bf Broader impacts}
    \item[] Question: Does the paper discuss both potential positive societal impacts and negative societal impacts of the work performed?
    \item[] Answer: \answerYes{}
    \item[] Justification: Potential impacts are discussed. The work aims to positively impact the reliability of observational causal inference by providing a method for identifying unmeasured confounders. We do not foresee direct ethical risks or negative societal impacts from this work.
    \item[] Guidelines:
    \begin{itemize}
        \item The answer \answerNA{} means that there is no societal impact of the work performed.
        \item If the authors answer \answerNA{} or \answerNo, they should explain why their work has no societal impact or why the paper does not address societal impact.
        \item Examples of negative societal impacts include potential malicious or unintended uses (e.g., disinformation, generating fake profiles, surveillance), fairness considerations (e.g., deployment of technologies that could make decisions that unfairly impact specific groups), privacy considerations, and security considerations.
        \item The conference expects that many papers will be foundational research and not tied to particular applications, let alone deployments. However, if there is a direct path to any negative applications, the authors should point it out. For example, it is legitimate to point out that an improvement in the quality of generative models could be used to generate Deepfakes for disinformation. On the other hand, it is not needed to point out that a generic algorithm for optimizing neural networks could enable people to train models that generate Deepfakes faster.
        \item The authors should consider possible harms that could arise when the technology is being used as intended and functioning correctly, harms that could arise when the technology is being used as intended but gives incorrect results, and harms following from (intentional or unintentional) misuse of the technology.
        \item If there are negative societal impacts, the authors could also discuss possible mitigation strategies (e.g., gated release of models, providing defenses in addition to attacks, mechanisms for monitoring misuse, mechanisms to monitor how a system learns from feedback over time, improving the efficiency and accessibility of ML).
    \end{itemize}
    
\item {\bf Safeguards}
    \item[] Question: Does the paper describe safeguards that have been put in place for responsible release of data or models that have a high risk for misuse (e.g., pre-trained language models, image generators, or scraped datasets)?
    \item[] Answer: \answerNA{}
    \item[] Justification: The paper introduces a novel observational study design. It does not release new generative models or datasets, posing no such risks.
    \item[] Guidelines:
    \begin{itemize}
        \item The answer \answerNA{} means that the paper poses no such risks.
        \item Released models that have a high risk for misuse or dual-use should be released with necessary safeguards to allow for controlled use of the model, for example by requiring that users adhere to usage guidelines or restrictions to access the model or implementing safety filters. 
        \item Datasets that have been scraped from the Internet could pose safety risks. The authors should describe how they avoided releasing unsafe images.
        \item We recognize that providing effective safeguards is challenging, and many papers do not require this, but we encourage authors to take this into account and make a best faith effort.
    \end{itemize}

\item {\bf Licenses for existing assets}
    \item[] Question: Are the creators or original owners of assets (e.g., code, data, models), used in the paper, properly credited and are the license and terms of use explicitly mentioned and properly respected?
    \item[] Answer: \answerYes{}
    \item[] Justification: All assets used (MIMIC-III, AUMCdb, SICdb) are properly cited in the Introduction and their data usage terms are appropriately respected.
    \item[] Guidelines:
    \begin{itemize}
        \item The answer \answerNA{} means that the paper does not use existing assets.
        \item The authors should cite the original paper that produced the code package or dataset.
        \item The authors should state which version of the asset is used and, if possible, include a URL.
        \item The name of the license (e.g., CC-BY 4.0) should be included for each asset.
        \item For scraped data from a particular source (e.g., website), the copyright and terms of service of that source should be provided.
        \item If assets are released, the license, copyright information, and terms of use in the package should be provided. For popular datasets, \url{paperswithcode.com/datasets} has curated licenses for some datasets. Their licensing guide can help determine the license of a dataset.
        \item For existing datasets that are re-packaged, both the original license and the license of the derived asset (if it has changed) should be provided.
        \item If this information is not available online, the authors are encouraged to reach out to the asset's creators.
    \end{itemize}

\item {\bf New assets}
    \item[] Question: Are new assets introduced in the paper well documented and is the documentation provided alongside the assets?
    \item[] Answer: \answerYes{}
    \item[] Justification: The anonymized code includes a README file with instruction explaining how to use the code.
    \item[] Guidelines:
    \begin{itemize}
        \item The answer \answerNA{} means that the paper does not release new assets.
        \item Researchers should communicate the details of the dataset\slash code\slash model as part of their submissions via structured templates. This includes details about training, license, limitations, etc. 
        \item The paper should discuss whether and how consent was obtained from people whose asset is used.
        \item At submission time, remember to anonymize your assets (if applicable). You can either create an anonymized URL or include an anonymized zip file.
    \end{itemize}

\item {\bf Crowdsourcing and research with human subjects}
    \item[] Question: For crowdsourcing experiments and research with human subjects, does the paper include the full text of instructions given to participants and screenshots, if applicable, as well as details about compensation (if any)?
    \item[] Answer: \answerNA{}
    \item[] Justification: No crowdsourcing of human subjects involved.
    \item[] Guidelines:
    \begin{itemize}
        \item The answer \answerNA{} means that the paper does not involve crowdsourcing nor research with human subjects.
        \item Including this information in the supplemental material is fine, but if the main contribution of the paper involves human subjects, then as much detail as possible should be included in the main paper. 
        \item According to the NeurIPS Code of Ethics, workers involved in data collection, curation, or other labor should be paid at least the minimum wage in the country of the data collector. 
    \end{itemize}

\item {\bf Institutional review board (IRB) approvals or equivalent for research with human subjects}
    \item[] Question: Does the paper describe potential risks incurred by study participants, whether such risks were disclosed to the subjects, and whether Institutional Review Board (IRB) approvals (or an equivalent approval/review based on the requirements of your country or institution) were obtained?
    \item[] Answer: \answerNA{}
    \item[] Justification: The study relies exclusively on publicly available, completely de-identified electronic health records (MIMIC-III, AUMCdb, SICdb). Because this data is fully anonymized and retrospective, it poses no risk to individuals and is exempt from requiring Institutional Review Board (IRB) approval.
    \item[] Guidelines:
    \begin{itemize}
        \item The answer \answerNA{} means that the paper does not involve crowdsourcing nor research with human subjects.
        \item Depending on the country in which research is conducted, IRB approval (or equivalent) may be required for any human subjects research. If you obtained IRB approval, you should clearly state this in the paper. 
        \item We recognize that the procedures for this may vary significantly between institutions and locations, and we expect authors to adhere to the NeurIPS Code of Ethics and the guidelines for their institution. 
        \item For initial submissions, do not include any information that would break anonymity (if applicable), such as the institution conducting the review.
    \end{itemize}

\item {\bf Declaration of LLM usage}
    \item[] Question: Does the paper describe the usage of LLMs if it is an important, original, or non-standard component of the core methods in this research? Note that if the LLM is used only for writing, editing, or formatting purposes and does \emph{not} impact the core methodology, scientific rigor, or originality of the research, declaration is not required.
    \item[] Answer: \answerNA{}
    \item[] Justification: Language models were used for text polishing and checking correctness, but not for developing core methodology.
    \item[] Guidelines:
    \begin{itemize}
        \item The answer \answerNA{} means that the core method development in this research does not involve LLMs as any important, original, or non-standard components.
        \item Please refer to our LLM policy in the NeurIPS handbook for what should or should not be described.
    \end{itemize}

\end{enumerate}
\fi

\end{document}